\documentclass[%
 reprint,
 amsmath,amssymb,
 aps,
]{revtex4-2}

\usepackage{graphicx}
\usepackage{xcolor}
\usepackage[normalem]{ulem}
\usepackage[english]{babel}

\usepackage{amsmath}
\usepackage{amssymb,amsthm}
\usepackage{stmaryrd}
\usepackage{dsfont}
\usepackage{braket}

\usepackage{here}
\usepackage{array}
\usepackage{booktabs}
\usepackage{makecell}
\usepackage{longtable}
\usepackage{tabularx}
\usepackage{pict2e}
\usepackage{diagbox}
\newcolumntype{Y}{>{\centering\arraybackslash}X}

\usepackage{tikz}
\usepackage{tikz-cd}
\usepackage{quantikz}

\usepackage{ytableau}
\ytableausetup{boxsize=1em}

\usepackage{xcolor}
\usepackage[
  colorlinks=true,
  linkcolor=blue,
  citecolor=blue,
  urlcolor=blue
]{hyperref}

\makeatletter

\newcommand{\namedlabel}[2]{\begingroup
  #2%
  \def\@currentlabel{#2}%
  \phantomsection\label{#1}\endgroup}
\makeatother

\newtheorem{Def}{Definition}
\newtheorem{Thm}[Def]{Theorem}
\newtheorem{Cor}[Def]{Corollary}

\newtheorem{Lem}[Def]{Lemma}

\newtheorem{task}{Task}

\newcommand{\bigO}[1]{\mathcal{O}\!\left(#1\right)}

\newcommand{\bigOmega}[1]{\Omega\!\left(#1\right)}
\newcommand{\bigtilOmega}[1]{\tilde{\Omega}\!\left(#1\right)}
\newcommand{\bigtheta}[1]{\Theta\!\left(#1\right)}

\DeclareMathOperator{\Tr}{Tr}

\DeclareRobustCommand{\erase}{\bgroup\markoverwith{\textcolor{red}{\rule[.5ex]{2pt}{0.4pt}}}\ULon}

\begin{document}

\title{Sample-Query Interconversion of Block Encoding of Unknown Quantum States}
\author{Manaki Arihara}
\affiliation{Department of Physics, The University of Tokyo,
7-3-1 Hongo, Bunkyo-ku, Tokyo 113-0033, Japan}
\author{Mio Murao}
\affiliation{Department of Physics, The University of Tokyo,
7-3-1 Hongo, Bunkyo-ku, Tokyo 113-0033, Japan}
\date{\today}
\begin{abstract}
Block encoding embeds a matrix as a sub-block of a unitary matrix and serves as a fundamental input model for quantum algorithms based on quantum singular value transformation, enabling polynomial transformations of matrices encoded in unitary operators. Block encoding of unknown quantum states can be useful for quantum learning; however, the fundamental limits on converting between unknown quantum states and their block-encoding unitary channels remain poorly understood. In this paper, we investigate this convertibility in both directions. First, we prove that implementing an $\varepsilon$-approximate block-encoding unitary channel of an unknown quantum state requires $\Omega(1/\varepsilon)$ copies of the state, matching known upper bounds up to logarithmic factors. Second, we show that recovering a rank-$r$, $d$-dimensional quantum state $\rho$ given query access to its block-encoding unitary channel generally requires $\Omega((1/\lambda_{\max}(\rho))\sqrt{d/r})$ queries, where $\lambda_{\max}(\rho)$ is the maximum eigenvalue of $\rho$, revealing an unavoidable dependence on the dimension of the state. Our results identify inherent limitations of block encoding as a representation of unknown quantum states and reveal a separation between learning properties of a quantum state and generating the state itself. Using our techniques, we further establish lower bounds for specific state-generation tasks, including ground-state preparation and Gibbs-state preparation.
\end{abstract}

\maketitle
\section{Introduction}
\begin{figure}[t]
  \centering
  \includegraphics[width=\linewidth]{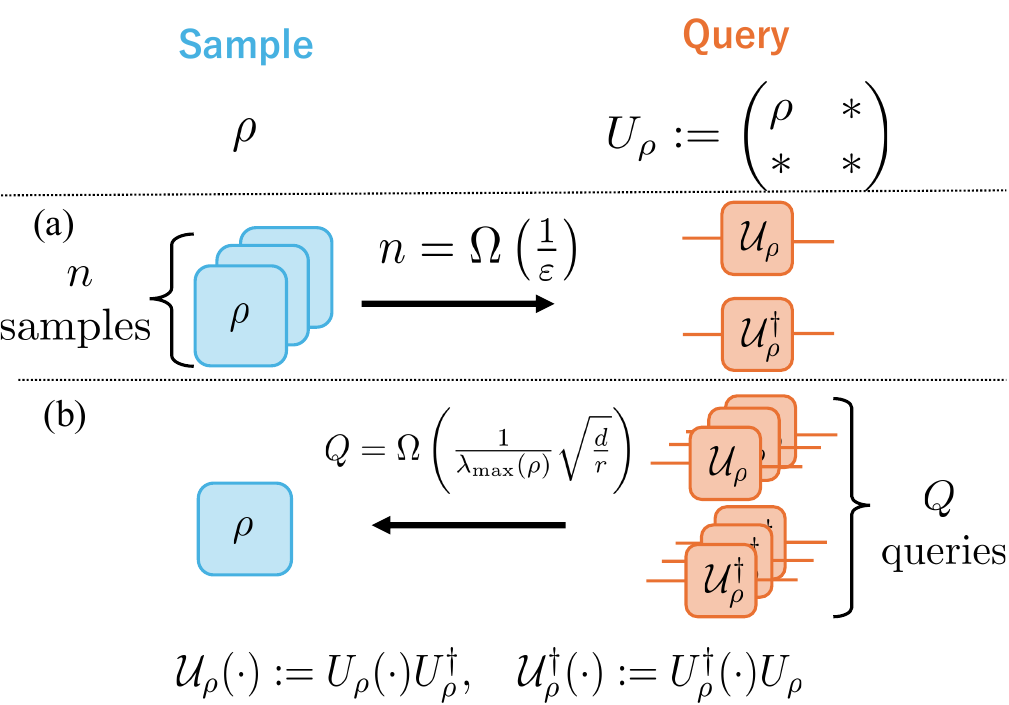}
\caption{Summary of the results. (a) Sample-to-query conversion: implementing $\varepsilon$-approximate block-encoding unitary channels $\mathcal{U}_\rho$ and $\mathcal{U}_\rho^\dagger$ given sample access to an unknown quantum state $\rho$. We show that $\bigOmega{1/\varepsilon}$ copies of $\rho$ are required for this task (Theorem~\ref{Theorem:lower bound of samples}). (b) Query-to-sample conversion: recovering an unknown quantum state $\rho$ given query access to its block-encoding unitary channels $\mathcal{U}_\rho$ and $\mathcal{U}_\rho^\dagger$. We show that $\bigOmega{(1/\lambda_{\max}(\rho))\sqrt{d/r}}$ queries are required for this task, where $\lambda_{\max}(\rho)$ is the maximum eigenvalue of $\rho$, and $d$ and $r$ are the dimension and rank of $\rho$, respectively (Theorem~\ref{theorem:lower-bound-of-queries}). This lower bound reveals an unavoidable dependence on the dimension.}

  \label{Fig:abstract}
\end{figure}

Quantum singular value transformation (QSVT) is a central subroutine in many recent quantum algorithms \cite{Gilyen2019,Low2019}, enabling polynomial transformations of the singular values of a given matrix on a quantum computer. The basic input model underlying this framework is block encoding \cite{Gilyen2019}. Block encoding embeds a target matrix $A$ as a subblock of a larger unitary matrix, allowing QSVT to perform polynomial transformations of the singular values of $A$. In many works, a block-encoding unitary channel is assumed to be given as an input. Such a block-encoding unitary channel can be efficiently implemented in \textit{white-box} settings when the target matrix $A$ is known and relevant structural information, such as its sparsity \cite{camps2023} or certain specific structures \cite{Sünderhauf2024}, is available.

On the other hand, QSVT itself is universal with respect to the embedded matrix $A$, and a full classical description of $A$ is not required to perform QSVT.
This \textit{black-box} viewpoint naturally suggests the use of QSVT for quantum learning, which aims to extract task-relevant information from unknown quantum objects, such as quantum states and channels \cite{montanaro2018}. If a block-encoding unitary channel of an unknown quantum object can be efficiently implemented, QSVT could be used to transform the embedded object and extract its desired properties. For such learning tasks, conversion between black-box quantum objects and their block-encoding unitary channels arises naturally, and understanding the resources required for this conversion is crucial for evaluating the efficiency of quantum learning algorithms.

In previous works, algorithms have been proposed that use $O(1/\varepsilon)$ copies of an unknown quantum state to implement an $\varepsilon$-approximate block-encoding unitary channel of its density matrix and estimate properties of the state \cite{Wang2024,Wang2025b}.
It has also been shown that implementing a block-encoding unitary channel of an unknown quantum channel generally requires a number of queries that depends on the dimension of the system~\cite{niwa2025}.  These results suggest that converting an unknown quantum object into its block-encoding unitary channel can incur an intrinsic resource cost. In this work, we focus on unknown quantum states and investigate the conversion between sample access to a quantum state and query access to its block-encoding unitary channel in both directions (Fig.~\ref{Fig:abstract}).

First, we consider the direction from sample access to an unknown quantum state to query access to its block-encoding unitary channel and its inverse. 
Our first main result shows that implementing quantum channels that simulate the block-encoding unitary channel and its inverse with error $\varepsilon$ in diamond norm requires $\Omega(1/\varepsilon)$ copies of the input state in general.
Thus, in the direction from samples to block-encoding unitary channels, accurate conversion to block-encoding access has an unavoidable cost, although this cost does not depend on the dimension $d$ of the state.

Second, we turn to the opposite direction, from query access to a block-encoding unitary channel and its inverse to a quantum state. We refer to this conversion task as state recovery. Our second main result establishes a dimension-dependent lower bound on the query complexity of state recovery, which holds even under a rank-$r$ promise and can be exponential in the number of qubits in the worst case. 
In contrast to the sample-to-query direction, this result reveals an unavoidable dimension-dependent cost for state recovery.
We also provide optimal algorithms for state recovery when the quantum state has a nearly flat spectrum, including projector states and pure states, thereby clarifying when recovery with tight bounds is possible under additional constraints.

Further, we show that the same lower-bound framework yields consequences for two standard state-generation tasks: ground-state preparation and Gibbs-state preparation. 
For ground-state preparation, QSVT-based eigenstate filtering has been shown to achieve near-optimal dependence on the initial overlap $\gamma$ and the spectral gap $\Delta$ \cite{Lin2020}. We construct a family of diagonal marked Hamiltonians that establishes a matching lower bound of $\Omega\left(\frac{1}{\gamma\Delta}\right)$ on the \textit{joint} dependence on $\gamma$ and $\Delta$.
For Gibbs-state preparation, previous block-encoding algorithms and sample-to-query lower bounds establish the relevance of both the inverse temperature $\beta$ and the system dimension $d$ \cite{Gilyen2019,Wang2025a}.  We construct a family of low-temperature instances that yields a query lower bound of 
$\widetilde{\Omega}\left(\beta\sqrt{d}\right)$,
matching the known worst-case scaling up to logarithmic factors when
$\beta=\Omega\left(\log d\right)$.

Taken together, our results indicate that unknown quantum states and the block-encoding unitary channels of their density matrices are not freely interchangeable resources.
Sample access to an unknown quantum state does not automatically provide arbitrarily accurate access to its block-encoding unitary channel and its inverse without incurring an $\Omega(1/\varepsilon)$ cost.
Conversely, query access to a block-encoding unitary channel of a density matrix does not generally allow one to efficiently generate the corresponding quantum state.
Thus, our results identify fundamental limitations on the convertibility between unknown quantum states and their block-encoding unitary channels and reveal a computational separation between learning properties of quantum states and generating the states themselves.

\section{Preliminaries}
\subsection{Basic Notation}

Let $\mathcal{H}$ be a $d$-dimensional Hilbert space with computational basis $\{\ket{i}\}_{i=1}^d$.
We define the normalized maximally entangled state on $\mathcal{H}\otimes\mathcal{H}$ by
\begin{equation}
  \ket{\Phi^+}=\frac{1}{\sqrt{d}}\sum_{i=1}^d \ket{i}\ket{i}.
\end{equation}

For a vector $\ket{\psi}$, we denote its Euclidean norm by
\begin{equation}
  \|\ket{\psi}\|_2=\sqrt{\braket{\psi|\psi}}.
\end{equation}
For a linear operator $A$, we denote its trace norm and operator norm by
\begin{align}
  \|A\|_1&=\Tr\sqrt{A^\dagger A},
  \\\|A\|_\infty&=\sup_{\|\ket{\psi}\|_2=1}\|A\ket{\psi}\|_2,
\end{align}
respectively, where $\|A\|_\infty$ equals the largest singular value of $A$.
For two quantum states $\rho$ and $\sigma$, their trace distance is defined as
\begin{equation}
  D_{\mathrm{tr}}(\rho,\sigma)=\frac{1}{2}\|\rho-\sigma\|_1.
\end{equation}

For two probability distributions $P$ and $Q$ on a finite set $\mathcal{X}$, the total variation distance is defined as
\begin{equation}
  d_{\mathrm{TV}}(P,Q)=\frac{1}{2}\sum_{x\in\mathcal{X}} |P(x)-Q(x)|.
\end{equation}
The Kullback--Leibler divergence \cite{Kullback1951} from $P$ to $Q$ is defined as
\begin{equation}
  D_{\mathrm{KL}}(P\|Q)=\sum_{x\in\mathcal{X}}P(x)\log\frac{P(x)}{Q(x)}.
\end{equation}
We use the convention that $0\log(0/q)=0$, and $D_{\mathrm{KL}}(P\|Q)=\infty$ if there exists $x\in\mathcal{X}$ such that $P(x)>0$ and $Q(x)=0$.
Unless otherwise stated, $\log$ denotes the natural logarithm.

\subsection{Block Encoding}
We use the standard definition of block encoding introduced in \cite{Gilyen2019}.

\begin{Def}[block encoding { \cite{Gilyen2019}}]
Let $A$ be an operator on a $d$-dimensional Hilbert space.
Let $a\in\mathbb{Z^+}$ and $\alpha\in\mathbb{R^+}$.
A unitary matrix $U$ acting on $\lceil a+\log_2 d\rceil$ qubits is called an $(\alpha,a,\varepsilon)$-block encoding of $A$
if it satisfies
\begin{equation}
  \left\| A - \alpha\,(\bra{0}^{\otimes a}\otimes I)\,U\,(\ket{0}^{\otimes a}\otimes I) \right\|_\infty\le \varepsilon.
\end{equation}
\label{Def:block-encoding}
\end{Def}
Observe that
\begin{align}
    \left\|A\right\|_\infty
    &\leq\left\|A-\alpha\,(\bra{0}^{\otimes a}\otimes I)\,U\,(\ket{0}^{\otimes a}\otimes I)\right\|_\infty+\alpha\left\|U\right\|_\infty\nonumber\\ &\leq\alpha+\varepsilon.
\end{align}
Therefore, $\alpha$ must satisfy
\begin{equation}
    \label{equation:condition_of_BE}
    \alpha\geq\left\|A\right\|_\infty-\varepsilon
\end{equation}
\section{Sample Lower Bound for Implementing a block-encoding unitary channel}
We consider a task that converts sample access to an unknown quantum state into query access to its block-encoding unitary channel and its inverse. 
We formally state the task as follows.
\begin{task}
    \label{task:perform block-encoding unitary channel}
    Let $\alpha\ge1,a\,(\in\mathbb{N})>2,\varepsilon\in(0,1/4)$ and $\rho$ be an unknown quantum state on the $d$-dimensional Hilbert space.
    Given sample access to $n$ copies of $\rho$, the goal is to implement a single use of each of the
quantum channels $\mathcal{E}_\rho$ and $\mathcal{E}_\rho^\dagger$ such that
    \begin{equation}
        \left\|\mathcal{E}_\rho-\mathcal{U}_\rho\right\|_\diamond\le\varepsilon, \quad\left\|\mathcal{E}_\rho^\dagger-\mathcal{U}_\rho^\dagger\right\|_\diamond\le\varepsilon,
        \label{equation:desired quantum channel}
    \end{equation}
    where $\mathcal{U}_\rho(\cdot):=U_\rho(\cdot) U_\rho^\dagger$ and $\mathcal{U}_\rho^\dagger(\cdot):=U_\rho^\dagger(\cdot) U_\rho$ denote the unitary channels induced by $U_\rho$ and $U_\rho^\dagger$, respectively.
    We refer to $U_\rho$ as an $(\alpha,a,0)$-block encoding of $\rho$.
\end{task}
Note that $\mathcal{E}_\rho^\dagger$ does not denote the adjoint of $\mathcal{E}_\rho$; rather it denotes a quantum channel approximating $\mathcal{U}_\rho^\dagger$.

For constant $\alpha$ and $a$, the quantum channels in Task \ref{task:perform block-encoding unitary channel} can be implemented using the following known algorithm.

\begin{Lem}[ \cite{Wang2025a} Lemma 2.21]
    For every $\varepsilon\in(0,1)$ and given sample access to copies of a quantum state $\rho$, 
    we can implement a quantum channel $\mathcal{E}$ using
    \begin{equation}
        O\!\left(\frac{1}{\varepsilon}\log^2\frac{1}{\varepsilon}\right)
    \end{equation}
    copies of $\rho$ such that $\left\|\mathcal{E} - \mathcal{U}_\rho\right\|_\diamond\le\varepsilon$ and $U_\rho$ is a $(2, 4, 0)$-block-encoding of $\rho$.
    Moreover, we can also implement a quantum channel $\mathcal{E}^\dagger$ with the same sample complexity such that 
    $\left\|\mathcal{E}^\dagger - \mathcal{U}^\dagger_\rho\right\|_\diamond\le\varepsilon$.
\end{Lem}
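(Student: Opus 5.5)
We will obtain the channel through density matrix exponentiation combined with the standard linear-combination trick that turns a Hamiltonian evolution into a block encoding. The first step uses sample-based Hamiltonian simulation in the style of Lloyd--Mohseni--Rebentrost, as refined in the sample-optimal form used by Wang et al. Given copies of $\rho$, we implement a channel $\varepsilon'$-close in diamond norm to $e^{-i\rho t}$. The refined analysis, with a controlled version via a qubit-controlled partial swap, gives sample cost $O(t^2/\varepsilon')$ in the naive version and $\tilde O(t/\varepsilon')$ in the optimized version. Here we only need $t=O(1)$, and we will track the $\varepsilon$-dependence carefully.

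The second step extracts $\rho$ from $e^{-i\rho t}$ with $t$ constant. Since $\|\rho\|_\infty\le 1$, we have $\sin(\rho t)/\sin(t)$ approximately $\rho$ for small $t$, but an exact $(2,4,0)$ encoding is needed. We therefore use QSVT (quantum eigenvalue transformation) applied to controlled-$e^{\pm i\rho t}$ to build a block encoding of $\rho/2$ exactly, in the following sense. The target unitary $U_\rho$ is defined \emph{as} the ideal circuit built from a fixed sequence of ideal controlled-$e^{\pm i\rho t}$ gates together with fixed phase rotations. The circuit is chosen to realize a polynomial in $e^{i\rho t}$ whose top-left block equals $\rho/2$ on the spectrum $[0,1]$, using a Fourier/arcsin-type series. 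The degree of that polynomial is $L=O(\log(1/\varepsilon))$, and the resulting small truncation is absorbed by making the ideal $U_\rho$ itself the exact encoding; one uses a few extra ancillas, which fixes $a=4$ and normalization $2$.

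The third step replaces each ideal controlled evolution by its sample-based approximation with error $\varepsilon/L$. Subadditivity of diamond-norm error under composition then gives total error $\le\varepsilon$. The total sample cost is $L\cdot\tilde O(L/\varepsilon)=O(\varepsilon^{-1}\log^2(1/\varepsilon))$. For $\mathcal{E}^\dagger$ we run the circuit in reverse with $t\to -t$; this works because density matrix exponentiation works for either sign of time. Reversing therefore uses the same resources.

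The main obstacle is the second step. We must ensure that the ideal circuit gives an \emph{exact} $(2,4,0)$-block encoding, rather than merely an approximate one whose truncation error would spoil the $0$. The approach I would try is to avoid polynomial truncation altogether. One uses the exact identity that $\frac{1}{2i}(e^{i\rho t}-e^{-i\rho t})=\sin(\rho t)$ is block encoded by LCU. This gives a unitary encoding $\sin(\rho t)$ exactly, and one then accepts a slightly modified convention in which the encoded operator is exactly $\rho$ only after a QSVT correction whose polynomial is exact on the spectrum. If this fails, I would instead cite the construction directly as in the referenced lemma of Wang et al.
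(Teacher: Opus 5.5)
Your overall route, density matrix exponentiation for a constant-time $e^{\mp i\rho t}$ followed by QSVT with $L=O(\log(1/\varepsilon))$ controlled uses to extract $\rho$, matches the construction the paper cites from Ref.~\cite{Wang2025a}. The count $L\cdot O(L/\varepsilon)=O(\varepsilon^{-1}\log^2(1/\varepsilon))$ is correct provided you use the plain $O(t^2/\delta)$ cost with $t=O(1)$ and $\delta=\Theta(\varepsilon/L)$. Drop the \emph{optimized} $\tilde O(t/\varepsilon')$ claim: $\Theta(t^2/\delta)$ is optimal for this primitive, and a $\tilde O$ in the final count would not certify the stated $\log^2$ anyway.

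The genuine gap is your second step, exactly where you locate the obstacle. Neither mechanism you offer produces an exact $(2,4,0)$-block encoding. Take any circuit made of $L$ controlled $e^{\pm i\rho t}$ gates and fixed unitaries, and set $\rho=\mathrm{diag}(x,1-x,0,\dots,0)$. The relevant block entry is then a finite sum $\sum_m c_m e^{imxt}$. This function is analytic and bounded on $\mathbb{R}$, so it cannot equal $x/2$ on all of $[0,1]$, which is required because the spectrum of $\rho$ is unknown. Hence \emph{defining} $U_\rho$ to be the ideal circuit is circular: that circuit encodes some $P(\rho)\neq\rho/2$. The LCU variant fails for the same reason. $\sin(\rho t)$ is exactly block-encoded, but no polynomial agrees with $\arcsin(\cdot)/(2t)$ on an interval, so a QSVT correction \emph{exact on the spectrum} does not exist.

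The missing idea is that $U_\rho$ need not be the unitary you implement; it only has to exist and be close to it. This is supplied by the robustness lemma for block encodings, the same tool the paper uses in Appendix~\ref{appendix:state-to-query} (Lemma~\ref{lemma:robustness of block-encoding}). Suppose the implemented QSVT circuit $\widetilde U$ block-encodes $\widetilde A$ with $\|\widetilde A-\rho/2\|_\infty\le\eta$. Because $\|\rho/2\|_\infty\le1/2$, the hypothesis $\|A-\widetilde A\|_\infty+\|(A+\widetilde A)/2\|_\infty^2\le1$ holds for $A=\rho/2$ and small $\eta$. So, after padding to enough ancillas (four qubits suffice), there is a unitary $U_\rho$ whose top-left block is exactly $\rho/2$ with $\|U_\rho-\widetilde U\|_\infty=O(\eta)$, and hence $\|\mathcal{U}_\rho-\widetilde{\mathcal{U}}\|_\diamond=O(\eta)$.

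Taking $\eta=\Theta(\varepsilon)$ keeps $L=O(\log(1/\varepsilon))$. Adding the error of the $L$ simulated evolutions then gives $\|\mathcal{E}-\mathcal{U}_\rho\|_\diamond\le\varepsilon$ once the budget is split between the two error sources. The same $U_\rho$ also handles the inverse, since $\|U_\rho^\dagger-\widetilde U^\dagger\|_\infty=\|U_\rho-\widetilde U\|_\infty$ and your reversed circuit approximates $\widetilde U^\dagger$. This matters because $\mathcal{E}$ and $\mathcal{E}^\dagger$ must approximate one fixed $U_\rho$ and its inverse.
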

Roughly speaking, this algorithm consists of two steps.
First, we use density matrix exponentiation \cite{Seth2014,Kimmel2017} to implement a quantum channel that approximates a unitary channel $e^{-i \rho t}(\cdot) e^{i\rho t}$.
Second, we apply QSVT to approximate the logarithm of the unitary operator.
However, it is not clear whether the resulting sample complexity reflects an inherent limitation of the access model or merely an artifact of the known algorithms.

We establish a lower bound for this task that is tight up to logarithmic factors.
\begin{Thm}
    Any algorithm that implements a single use of each of the quantum channels $\mathcal{E}_\rho$ and $\mathcal{E}_\rho^\dagger$ defined in Task \ref{task:perform block-encoding unitary channel} requires
\begin{equation}
    \bigOmega{\frac{1}{\varepsilon}}
\end{equation}
copies of $\rho$ in the worst case in $\alpha=\Theta(1)$ regime.
\label{Theorem:lower bound of samples}
\end{Thm}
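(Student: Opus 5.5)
We prove the bound by reduction from a state-discrimination problem whose sample complexity is known to be large. Any algorithm for Task~\ref{task:perform block-encoding unitary channel} that uses $n$ copies of $\rho$ can be combined with a few queries to $\mathcal{E}_\rho$ and $\mathcal{E}_\rho^\dagger$ to distinguish two nearby candidate states with constant success probability. The standard Helstrom/fidelity bound says such discrimination needs $n=\Omega(1/\delta)$ copies when the states are at infidelity distance $\delta$. The aim is to arrange $\delta=\Theta(\varepsilon)$.

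\textbf{Hard instance.} Take a single qubit (or embed one in $\mathcal{H}$) and the two states
\begin{equation}
\rho_0=(1-\eta)\ket{0}\bra{0}+\eta\ket{1}\bra{1},\qquad
\rho_1=(1-\eta')\ket{0}\bra{0}+\eta'\ket{1}\bra{1}.
\end{equation}
We choose them so that $\|\rho_0-\rho_1\|_\infty=\Theta(\varepsilon)$ and $1-F(\rho_0,\rho_1)=\Theta(\varepsilon)$. One concrete choice is $\eta=0$ and $\eta'=c\varepsilon$, which gives $F=\sqrt{1-c\varepsilon}$ and hence infidelity $\Theta(\varepsilon)$. Since $1-F(\rho_0^{\otimes n},\rho_1^{\otimes n})\le n(1-F(\rho_0,\rho_1))$ up to constants, distinguishing $\rho_0^{\otimes n}$ from $\rho_1^{\otimes n}$ with success probability $2/3$ forces $n=\Omega(1/\varepsilon)$. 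The argument goes through $1-F$ rather than trace distance, since $D_{\mathrm{tr}}(\rho_0,\rho_1)=\Theta(\varepsilon)$ alone would only give $\Omega(1/\varepsilon^2)$-type or weaker statements through the wrong route.

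\textbf{Reduction.} Given the implemented channels, we build a discrimination circuit: prepare $\ket{0}^{\otimes a}\ket{1}$, apply $\mathcal{E}_\rho$, and project the ancilla onto $\ket{0}^{\otimes a}$. For the ideal $U_\rho$ this succeeds with probability $\|\alpha^{-1}\rho\ket{1}\|^2$, which is $0$ versus $\Theta(\varepsilon^2)$. That gap is quadratically too small, and this is the main obstacle: a single use must turn an $\varepsilon$ perturbation of the block into an $\Omega(1)$ distinguishability gap of the output channels. To get an amplitude-level signal, we instead use interference through the block encoding, in Hadamard-test fashion, with both $U_\rho$ and $U_\rho^\dagger$. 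This is exactly why the task demands both $\mathcal{E}_\rho$ and $\mathcal{E}_\rho^\dagger$.

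The requirement is the following. The ideal unitaries $U_{\rho_0}$ and $U_{\rho_1}$ can be chosen to form an arbitrary dilation, and the algorithm must work for whichever one it targets. We therefore fix a specific family of block encodings, for example the canonical dilation built from $\sqrt{I-\rho^2/\alpha^2}$. We then show that the two-query circuit $C_\rho=(\mathcal{U}_\rho^\dagger\otimes \mathrm{id})\circ\Lambda\circ(\mathcal{U}_\rho\otimes\mathrm{id})$ satisfies $\|C_{\rho_0}-C_{\rho_1}\|_\diamond\ge 4\varepsilon+\Omega(1)\cdot 0$. Strictly, we need $\|\mathcal{U}_{\rho_0}-\mathcal{U}_{\rho_1}\|_\diamond>4\varepsilon$. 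For the dilation with $\eta=0$, the $\ket{1}$-sector of $U_{\rho_0}$ is a swap-like rotation by angle $\pi/2$. For $\rho_1$, the $\sqrt{1-\eta'^2/\alpha^2}$ entry gives a rotation angle differing by $\approx\eta'/\alpha$. The diamond distance is therefore $\approx 2\eta'/\alpha$, linear in $\varepsilon$.

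\textbf{Conclusion.} Choose $c$ with $2c\varepsilon/\alpha>4\varepsilon$, which is possible since $\alpha=\Theta(1)$ (this is where the hypothesis enters). Feeding the algorithm $\rho^{\otimes n}$ and using the optimal diamond-norm discriminator on $\mathcal{E}_\rho$ then distinguishes $\rho_0$ from $\rho_1$ with bias $\Omega(\varepsilon)$ per run. This is not constant, so we do not conclude from a single run. Instead we compare fidelities directly: the algorithm maps $\rho_i^{\otimes n}$ to channels whose Choi states $J_i$ satisfy $1-F(J_0,J_1)\ge\Omega(\varepsilon^2)$. By data processing, $1-F(J_0,J_1)\le 1-F(\rho_0^{\otimes n},\rho_1^{\otimes n})$. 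Writing $\theta$ for the Bures angle, $\theta(J_0,J_1)\ge\Omega(\varepsilon)$ while $\theta(\rho_0^{\otimes n},\rho_1^{\otimes n})\le O(\sqrt{n\varepsilon})$. This yields only $n=\Omega(\varepsilon)$, which is too weak, so the final step must exploit the fact that ideal block encodings differ at amplitude $\varepsilon$ while the states differ at Bures angle $\sqrt{\varepsilon}$.

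The fix I would try first is to take $\rho_0,\rho_1$ both full rank, with $\eta=1/2\pm\varepsilon$. Then $\theta(\rho_0,\rho_1)=\Theta(\varepsilon)$, so $\theta(\rho_0^{\otimes n},\rho_1^{\otimes n})=O(\sqrt{n}\,\varepsilon)$, while the block-encoding Choi states are separated by Bures angle $\Omega(\varepsilon)\cdot$(amplification). For the amplification, I would exploit the use of both $\mathcal{E}$ and $\mathcal{E}^\dagger$ inside a single coherent circuit to reach a constant gap, and I expect this amplification to be the crux of the proof.
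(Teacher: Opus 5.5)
Your proposal stops exactly where the proof has to happen. The amplification to a constant gap is left as an expectation, and the form you aim for cannot exist: a constant gap from a single use of $\mathcal{E}_\rho$ and $\mathcal{E}_\rho^\dagger$ inside one coherent circuit. Any circuit that uses each channel once is a fixed channel applied to $\rho^{\otimes n}$, so its bias is at most $\tfrac12\|\rho_0^{\otimes n}-\rho_1^{\otimes n}\|_1$. For your pure-versus-mixed pair this is at most $1-(1-c\varepsilon)^n\le nc\varepsilon$, so a bias of order $\varepsilon$ only forces $n=\Omega(1)$. For $\rho_\pm=\mathrm{diag}(1/2\pm\delta,1/2\mp\delta)$ with $\delta=\Theta(\varepsilon)$, a constant bias would force $n=\Omega(1/\varepsilon^2)$, which the known $O(\varepsilon^{-1}\log^2(1/\varepsilon))$ algorithm rules out. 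This is why every single-use estimate in your proposal collapses: the diamond-norm bias of order $\varepsilon$, and the Bures-angle bound giving $n=\Omega(\varepsilon)$. Separately, fixing the canonical $\sqrt{I-\rho^2/\alpha^2}$ dilation would not suffice even if it worked. A lower bound must also cover algorithms that target other $(\alpha,a,0)$-block encodings of $\rho$.

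The missing idea is to invoke the sample-to-query protocol many times, each time on fresh copies, and to charge its cost against an $\Omega(1/\varepsilon^2)$ discrimination bound. Use the pair $\rho_\pm$; your proposed fix is precisely the right instance. Regard $U_\rho$ as a block encoding of the scalar $(1/2\pm\delta)/\alpha$. Apply QSVT with an even polynomial approximating $\tfrac14[\operatorname{sign}(x+\tfrac{1}{2\alpha})-\operatorname{sign}(x-\tfrac{1}{2\alpha})]$ outside windows of width $\delta/\alpha$ around $\pm\tfrac{1}{2\alpha}$. This circuit has three key properties:
\begin{itemize}
\item it uses $l=O(\alpha/\delta)$ queries to $U_\rho$ and $U_\rho^\dagger$;
\item it depends only on the encoded block, so it works for every dilation;
\item it gives postselection probability close to $0$ for $\rho_+$ and close to $1/4$ for $\rho_-$.
\end{itemize}
This is the amplification you were looking for, from an amplitude difference $\Theta(\delta)$ to a constant gap. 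Replacing each query by $\mathcal{E}_\rho$ or $\mathcal{E}_\rho^\dagger$ costs at most $l\varepsilon$ in diamond norm by a hybrid argument. Taking $\delta=C\varepsilon$ with a large constant $C$ makes $l\varepsilon=O(1/C)$ small, because $\alpha=\Theta(1)$. A constant number of repetitions then discriminates with probability $2/3$ using $O(nl)=O(n/\varepsilon)$ copies. Helstrom plus Pinsker on the commuting pair requires $\Omega(1/\delta^2)=\Omega(1/\varepsilon^2)$ copies, so $n=\Omega(1/\varepsilon)$. Note that the argument needs exactly the trace-distance-type $\Omega(1/\varepsilon^2)$ bound you set aside in favour of infidelity. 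Your pure-versus-mixed pair is distinguishable from $O(1/\varepsilon)$ copies, so with it the same counting yields only $n=\Omega(1)$.
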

\begin{proof}[Proof sketch]
    This lower bound can be shown via reduction from a quantum state discrimination task.
    Consider two quantum states
    \begin{equation}
    \rho_+:=
    \begin{pmatrix}
        \frac{1}{2}+\varepsilon & 0 \\
        0 & \frac{1}{2}-\varepsilon
    \end{pmatrix},
    \quad
    \rho_-:=
    \begin{pmatrix}
        \frac{1}{2}-\varepsilon & 0 \\
        0 & \frac{1}{2}+\varepsilon
    \end{pmatrix}
    \label{equation:discriminated quantum state}
    \end{equation}
to be discriminated, whose matrix elements are represented in the computational basis.
Assume that we have access to $\mathcal{E}_\rho$ and $\mathcal{E}_\rho^\dagger$, where $\rho$ is promised to be $\rho_+$ or $\rho_-$.

Observe that $U_{\rho_\pm}$, which is an $(\alpha,a,0)$-block encoding of
$\rho_\pm$, can also be regarded as an $(\alpha,a+1,0)$-block encoding
of the scalar $1/2\pm\delta$.
We apply QSVT using a polynomial approximation to the function
\begin{equation}
    f(x)
    :=
    \frac{1}{4}
    \left(
        \operatorname{sign}\left(x+\frac{1}{2\alpha}\right)
        -
        \operatorname{sign}\left(x-\frac{1}{2\alpha}\right)
    \right).
\end{equation}
Since the encoded values $(1/2\pm\varepsilon)/\alpha$ lie on opposite
sides of the threshold $1/(2\alpha)$, QSVT allows us to distinguish
$\rho_+$ from $\rho_-$ using $O(1/\varepsilon)$ queries to $\mathcal{E}_\rho$ and $\mathcal{E}_\rho^\dagger$ for $\alpha=\Theta(1)$.

Suppose that each of the quantum channels $\mathcal{E}_\rho$ and $\mathcal{E}_\rho^\dagger$ can be implemented using $n$ copies of $\rho$.  Then, we can distinguish  whether $\rho$ is $\rho_+$ or $\rho_-$ with constant probability using $\bigO{n/\varepsilon}$ copies of $\rho$.
On the other hand, distinguishing $\rho_+$ and $\rho_-$ with constant success probability requires $\bigOmega{1/\varepsilon^2}$ copies.
Therefore, $n=\bigOmega{1/\varepsilon}$ holds. The full proof is given in Appendix~\ref{appendix:state-to-query}.
\end{proof}

We emphasize the difference between our result and that of Ref.~\cite{Wang2025a}.
In their work, the cost of replacing a $Q$-query algorithm accessing a
block-encoding unitary channel with a sample-based algorithm is characterized.
Their lower bounds apply in a multi-query regime and show that, for suitable
circuit families, simulating the entire computation requires
$\bigOmega{Q^2/\varepsilon}$ copies of $\rho$.
In contrast, our result focuses on the elementary one-shot conversion underlying
such algorithms, namely, implementing a single use of a block-encoding unitary
channel and its inverse.
This one-shot regime is not captured by the lower bound of
Ref.~\cite{Wang2025a}, whose proof requires $Q$ to be sufficiently large as a
function of the target accuracy.
We show that even this primitive sample-to-query conversion has an intrinsic cost.

\section{Query lower bound for the state recovery task}

Next, we consider the opposite conversion task: recovering an unknown quantum state from query access to its block-encoding unitary channel and its inverse.
We refer to this task as the {\em state recovery task}. The task is formally defined as follows.

\begin{task}[State Recovery Task]
Let $\varepsilon\in(0,1)$.
Suppose that $U_\rho$ is an $(\alpha,a,0)$-block encoding of $\rho$, where $\rho$ is an unknown quantum state on the $d$-dimensional Hilbert space.
Given query access to the
unitary channels 
$\mathcal{U}_\rho$ and $\mathcal{U}_\rho^\dagger$, induced by $U_\rho$ and $U_\rho^\dagger$, respectively, the goal is to output (recover) a single copy of a quantum state $\hat{\rho}$ such that
\begin{equation}
    \left\|\hat{\rho}-\rho\right\|_1\le \varepsilon.
\end{equation}
\label{task:state-recovery-task}
\end{task}
Note that the recovered quantum state may remain unknown in the sense that no classical description of the state is required.

To characterize the most general strategy for this task within the quantum circuit model, we use the quantum comb formalism \cite{Chiribella2008}, which describes general higher-order transformations of quantum channels. In our setting, the relevant transformation is a supermap whose input consists of $Q$ query slots, each of which can be filled with either $\mathcal U_\rho$ (forward access) or $\mathcal U_\rho^\dagger$ (inverse access), and whose output is a quantum state.  Such a channel-to-state supermap is included in the quantum comb formalism by regarding a quantum state as a quantum channel with a trivial input system. Accordingly, the most general strategy can be represented by the following quantum circuit:
\begin{widetext}
    \begin{equation}
        \Tr_{EA} \left[\left(\mathcal{C}_Q\circ\mathcal{U}_\rho^{(s_Q)}\circ\mathcal{C}_{Q-1}\circ\mathcal{U}_\rho^{(s_{Q-1})}\cdots\mathcal{C}_1\circ\mathcal{U}_\rho^{(s_1)}\circ\mathcal{C}_0\right)\,[\ket{0}\!\bra{0}_E\otimes \ket{0}\!\bra{0}^{\otimes a}_A\otimes\proj{\mathrm{init}}]\,\right],
    \end{equation}
\end{widetext}
where $s_i\in\{\pm\}$ specifies whether the $i$th query uses the forward or inverse block-encoding unitary channel, namely, $\mathcal{U}_\rho^{(+)} =\mathcal{U}_\rho$ (forward access) and $\mathcal{U}_\rho^{(-)} =\mathcal{U}_\rho^\dagger$ (inverse access). The unitary channels $\{\mathcal{C}_i\}_{i=0}^Q$ represent arbitrary interleaving operations independent of $\rho$. By enlarging the environment system $E$ if necessary, these interleaving operations can be taken to be unitary without loss of generality, since any quantum channel admits a unitary dilation on an extended system by Stinespring dilation. As illustrated in Fig.~\ref{fig:circuit2}, each query channel $\mathcal{U}_\rho^{(s_i)}$ acts jointly on the auxiliary system $A$ (the second wire) and the target system (the third wire), on which the recovered state $\hat{\rho}$ is output, while each $\mathcal{C}_i$ acts jointly on these systems and the environment system $E$ (the first wire).

\begin{figure}[h]
    \centering
    \includegraphics[width=\linewidth]{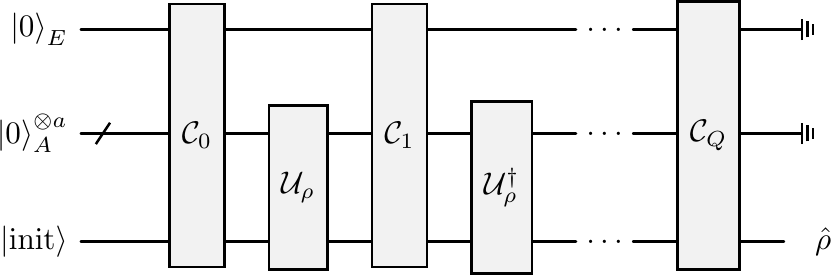}
    \caption{The quantum circuit representing the most general strategy for Task~\ref{task:state-recovery-task}. The first wire represents an environment system $E$, the second wire represents an auxiliary system $A$ and the third wire represents the target system on which the recovered state $\hat{\rho}$ is output.} 
    \label{fig:circuit2}

\end{figure}

\subsection{Query lower bound}
We now establish a lower bound on the number of queries required for state recovery from block-encoding access.
We show that the query complexity exhibits an unavoidable dependence on the dimension $d$ with the lower bound characterized in terms of the rank $r$ and the largest eigenvalue $\lambda_{\max}(\rho)$ of the unknown state $\rho$.

\begin{Thm}[Lower bound]\label{theorem:lower-bound-of-queries}
    Let $1\le r\le d/2$, and let $U_\rho$ be an $(\alpha,a,0)$-block encoding of an unknown rank-$r$ quantum state $\rho$.
    Let $\mathcal{U}_\rho$ and $\mathcal{U}_\rho^\dagger$ denote the unitary channels induced by $U_\rho$ and $U_\rho^\dagger$, respectively, and let $\lambda_{\max}(\rho)$ denote the largest eigenvalue of $\rho$.
    Any algorithm that succeeds in state recovery with trace-norm error at most $\varepsilon$ requires
    \begin{equation}
        \Omega\left(\frac{\alpha}{\lambda_{\max}(\rho)}\sqrt{\frac{d}{r}}(1-\varepsilon)\right)
    \end{equation}
    queries in total to $\mathcal{U}_\rho$ and $\mathcal{U}_\rho^\dagger$ in the worst case.
\end{Thm}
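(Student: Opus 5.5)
Our strategy is a reduction from unstructured search, using the hybrid (BBBV-type) argument for oracles given as unitaries together with their inverses. Fix a rank-$r$ family of states in which a hidden index set must be identified in order to output the state.

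\textbf{The state family.} Partition the $d$ basis vectors into $m=\lfloor d/r\rfloor\ge 2$ disjoint blocks $S_1,\dots,S_m$, each of size $r$. For each $j$ let
\begin{equation}
\rho_j=\frac{1}{r}\sum_{i\in S_j}\ket{i}\!\bra{i},
\end{equation}
so that $\lambda_{\max}(\rho_j)=1/r$. Since the $\rho_j$ have disjoint supports, $\|\rho_j-\rho_k\|_1=2$ for $j\ne k$.

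\textbf{The block encodings.} Choose block encodings $U_{\rho_j}$ that differ from a fixed reference unitary $U_0$ only by a small amount. Take $U_0$ to be an $(\alpha,a,0)$-block encoding of the zero matrix, for instance one that maps $\ket{0}^{\otimes a}\ket{i}$ to $\ket{1}\ket{0}^{\otimes a-1}\ket{i}$. Build $U_{\rho_j}$ as a rotation by angle $\theta=\arcsin(1/(r\alpha))$ in the two-dimensional subspace spanned by $\ket{0}^{\otimes a}\ket{i}$ and $\ket{1}\ket{0}^{\otimes a-1}\ket{i}$, for $i\in S_j$ only. Then $U_{\rho_j}$ is a valid $(\alpha,a,0)$-block encoding of $\rho_j$, and
\begin{equation}
U_{\rho_j}-U_0=\sum_{i\in S_j}P_i\,(R_\theta-R_0),
\end{equation}
where $P_i$ projects onto the target-register basis state $\ket{i}$ and $R_\theta$ denotes the rotation. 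Consequently $\|(U_{\rho_j}-U_0)\ket{\psi}\|_2\le c\,\theta\,\|P_{S_j}\ket{\psi}\|_2$ with $\theta=O(1/(\alpha r))=O(\lambda_{\max}/\alpha)$. The same bound holds for the inverses.

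\textbf{Hybrid argument.} Let $\ket{\psi^0_t}$ be the intermediate states of the algorithm when it runs with $U_0$. By the standard hybrid argument,
\begin{equation}
\|\ket{\psi^{j}_Q}-\ket{\psi^0_Q}\|_2\le c\,\theta\sum_{t}\|P_{S_j}\ket{\psi^0_t}\|_2 .
\end{equation}
Sum over $j$ and apply Cauchy--Schwarz. Because the blocks are disjoint, $\sum_j\|P_{S_j}\ket{\psi}\|_2^2\le 1$, so
\begin{equation}
\sum_j\|\ket{\psi^{j}_Q}-\ket{\psi^0_Q}\|_2\le c\,\theta\,Q\sqrt{m}.
\end{equation}
Therefore the average over $j$ of the final-state distance is at most $c\,\theta Q/\sqrt{m}$.

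\textbf{Lower bound from the success condition.} On the output register, the algorithm produces $\hat\rho_j$ with $\|\hat\rho_j-\rho_j\|_1\le\varepsilon$. The reference output $\hat\rho_0$ is a single fixed state. Since the $\rho_j$ live on disjoint supports, $\sum_j\Tr[P_{S_j}\hat\rho_0]\le 1$, so for most $j$ the reference output is far from $\rho_j$. Concretely, $\|\hat\rho_0-\rho_j\|_1\ge 2(1-\Tr[P_{S_j}\hat\rho_0])$. Hence for a constant fraction of the indices $j$, the triangle inequality and monotonicity of the trace norm under partial trace give
\begin{equation}
\|\ket{\psi^j_Q}-\ket{\psi^0_Q}\|_2\gtrsim \tfrac12\|\hat\rho_j-\hat\rho_0\|_1\ge(2-O(1/m))-\varepsilon .
\end{equation}
Averaging this over $j$ and comparing with the hybrid bound yields $\theta Q/\sqrt{m}=\Omega(1-\varepsilon)$, that is,
\begin{equation}
Q=\Omega\!\left(\sqrt{d/r}\,(1-\varepsilon)/\theta\right)=\Omega\!\left(\frac{\alpha}{\lambda_{\max}}\sqrt{\frac{d}{r}}(1-\varepsilon)\right).
\end{equation}
The hypothesis $r\le d/2$ guarantees $m\ge2$.

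\textbf{Main obstacle.} The delicate step is the construction of the $U_{\rho_j}$. They must be exact $(\alpha,a,0)$-block encodings whose difference from a common reference is supported only on the block $S_j$ and has norm $O(\lambda_{\max}/\alpha)$, for both $U$ and $U^\dagger$. The rotation construction above achieves this, provided $\alpha\ge 1/r$, which holds since $\alpha\ge\|\rho\|_\infty$. A secondary issue is obtaining the worst-case $\lambda_{\max}$ dependence, which would otherwise appear only for flat spectra. We handle this by noting that the family attains the bound with $\lambda_{\max}=1/r$; a more general family (a fixed spectrum placed on a hidden subset) would require the same argument with $\theta\propto\lambda_{\max}/\alpha$.
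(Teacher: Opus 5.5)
Your argument is correct. It uses the same core mechanism as the paper, a BBBV hybrid argument over a family of states with hidden support, but organizes it differently.

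\textbf{How the two routes differ.} The paper works in three steps:
\begin{itemize}
\item It first proves a search lower bound for a controlled phase oracle with marked phases $\arcsin(\lambda_j/\alpha)$ on a uniformly random $r$-subset (Lemma~\ref{Lem:BVVV}, using Jensen and $\mathbb{E}_S\sum_{x\in S}p_t(x)=r/d$).
\item It then simulates each call to $\mathcal{U}_f$ or $\mathcal{U}_f^\dagger$ with two calls to $\mathrm{c}\text{-}O_f$ and $\mathrm{c}\text{-}O_f^\dagger$ (Lemma~\ref{lem:block-encoding}).
\item Finally, it turns the trace-norm guarantee into a search success probability $1-\varepsilon/2$ by measuring in the computational basis.
\end{itemize}
You instead run the hybrid argument directly on the block-encoding oracle. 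Your instances differ from a reference block encoding of $0$ by a perturbation of norm $O(\lambda_{\max}/\alpha)$ supported on the hidden block. You replace random subsets by a disjoint partition plus Cauchy--Schwarz, and you use $P_{S_j}$ directly as the distinguishing test on the output.

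Your route is self-contained and never involves controlled oracles. It needs only $\lfloor d/r\rfloor$ hard instances and makes the block-encoding angle visible as the relevant perturbation size. The paper's route is modular: its phase-oracle lemma is exactly what it reuses for the ground-state and Gibbs-state corollaries, where the Hamiltonian block encodings are assembled from $\mathrm{c}\text{-}O_f$.

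\textbf{Details to tighten.}
\begin{itemize}
\item \emph{Factor of two.} Set $p_j:=\Tr[P_{S_j}\hat\rho_0]$. The per-index bound is $\|\ket{\psi^j_Q}-\ket{\psi^0_Q}\|_2\ge\frac12\|\hat\rho_j-\hat\rho_0\|_1\ge1-p_j-\varepsilon/2$. Your ``$\ge(2-O(1/m))-\varepsilon$'' is off by a factor of two, since $\frac12\|\hat\rho_j-\hat\rho_0\|_1\le1$.
\item \emph{Average directly.} Invoking ``a constant fraction of indices'' gives nothing when $m=2$. Instead, $\sum_jp_j\le1$ yields $\mathbb{E}_j\|\ket{\psi^j_Q}-\ket{\psi^0_Q}\|_2\ge1-1/m-\varepsilon/2\ge(1-\varepsilon)/2$, and $m\ge d/(2r)$ finishes the argument.
\item \emph{Definition of $R_\theta$.} Take $R_\theta$ to be $R_0$ composed with a rotation by $\theta$. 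Then its $(\ket{0}^{\otimes a}\ket{i},\ket{0}^{\otimes a}\ket{i})$ entry is $\sin\theta$, and $\|R_\theta-R_0\|_\infty=2\sin(\theta/2)\le\theta$.
\item \emph{General spectrum.} State the general-spectrum version explicitly rather than only the flat family. The theorem, as the paper uses it (e.g.\ for nearly flat spectra), is the bound for an arbitrary fixed spectrum on a hidden support. Your proof gives this verbatim: put $\lambda_1,\dots,\lambda_r$ on each block with angles $\arcsin(\lambda_i/\alpha)\le\frac{\pi}{2}\lambda_{\max}/\alpha$. Nothing else changes, because the output step only uses $\operatorname{supp}\rho_j\subseteq S_j$.
\end{itemize}
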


\begin{proof}[Proof sketch]
We show this lower bound by reduction from a quantum search problem \cite{Grover1996, Zalka1999, Boyer1999, Bennett1997}.
Let $S \subset \{1, 2, \cdots, d \}$ with $|S|=r$, and consider the function $f$
\begin{equation}
    f(x):=
        \begin{cases}
            \arcsin\frac{\lambda_x}{\alpha} \quad&\left(x\in S\right),\\
            0\quad&\left(x\notin S\right),
        \end{cases}
\end{equation}
where $\lambda_x >0$ and $\sum_{x\in S}\lambda_x=1$ holds.
The search task is to find $z\in S$ with constant success probability.
We introduce a phase oracle
\begin{equation}
    O_f\ket{x}=e^{if(x)}\ket{x}.
\end{equation}
Define 
\begin{equation}
   \rho_S:=\sum_{x \in S} \lambda_x \ket{x}\bra{x}.
\end{equation}
Let $U_f$ be an $(\alpha,a,0)$-block-encoding unitary of $\rho_S$.
It has been shown in Ref.~\cite{Gilyen2019} (see also Lemma~\ref{lem:block-encoding} in Appendix~\ref{appendix:query-to-sample}) that, using $O(1)$ queries to the controlled phase oracle 
\begin{equation}
 c\text{-}O_f=\ket{0}\bra{0} \otimes \mathbb{I} + \ket{1}\bra{1} \otimes O_f   
\end{equation} 
and its inverse $c\text{-}O_f^\dagger$, we can implement the block-encoding unitary channels $\mathcal{U}_f$ and $\mathcal{U}_f^\dagger$ induced by $U_f$ and $U_f^\dagger$, respectively.

Therefore, each query to the corresponding block-encoding unitary channel $\mathcal{U}_f$ or its inverse $\mathcal{U}_f^\dagger$ can be implemented using $O(1)$ queries to $c\text{-}O_f$ and $c\text{-}O_f^\dagger$. 

Suppose that a state-recovery algorithm outputs a state $\hat{\rho}$
satisfying
\begin{equation}
    \|\hat{\rho}-\rho_S\|_1\le\varepsilon
\end{equation}
using $Q$ queries to $\mathcal{U}_f$ and $\mathcal{U}_f^\dagger$.
Measuring $\hat{\rho}$ in the computational basis then yields an element
of $S$ with probability at least $1-\varepsilon/2$.
Therefore, the state-recovery algorithm gives a quantum search algorithm
using $O(Q)$ queries to $c\text{-}O_f$ and $c\text{-}O_f^\dagger$.

Then the corresponding search problem requires
\begin{equation}
    \Omega\left(
        \frac{\alpha}{\lambda_{\max}(\rho_S)}
        \sqrt{\frac{d}{r}}
        (1-\varepsilon)
    \right)
\end{equation}
queries.
Consequently, the same lower bound holds for the query complexity of
state recovery.
The full proof is given in Appendix~\ref{appendix:query-to-sample}.
\end{proof}

Even in the pure-state case, where $r=1$ and $\lambda_{\max}(\rho)=1$, the lower bound becomes $\bigOmega{\alpha\sqrt{d}}$, showing that a dimension dependence already arises.
More generally, for a rank-$r$ state in the black-box setting, we cannot assume any particular value of $\lambda_{\max}(\rho)$.
Since
\begin{equation}
    \frac{1}{r}\le\lambda_{\max}(\rho)\le1,
\end{equation}
we have $1\le1/\lambda_{\max}(\rho)\le r$.
Hence, in the worst case, the lower bound becomes $\bigOmega{\alpha\sqrt{rd}}$.
Therefore, the dimension dependence is unavoidable.
This worst case corresponds to states with a flat spectrum, for which we provide an optimal strategy for state recovery in the next subsection.

\subsection{Optimal recovery}
\begin{Cor}
    If $\rho$ is a rank-$r$ projector state, then there exists an algorithm that outputs an $\varepsilon$-approximation $\hat{\rho}$ to $\rho$ in trace norm with constant success probability using
    \begin{equation}
        \bigO{\alpha\sqrt{rd}}
    \end{equation}
    queries to $\mathcal{U}_\rho$ and $\mathcal{U}_\rho^\dagger$.
    \label{cor:recover-projection}
\end{Cor}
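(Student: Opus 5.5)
Write the projector state as $\rho=\Pi/r$, where $\Pi$ is a rank-$r$ projector onto a subspace $V$. The plan is to prepare $\Pi\otimes I$ applied to half of a maximally entangled state, using fixed-point amplitude amplification driven by the block encoding, and then discard the reference system.

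\textbf{Step 1 (the filtered state is the target).} Start from $\ket{\Phi^+}$ on the target system and a reference copy $R$. Applying $\Pi$ to the target gives
\begin{equation}
(\Pi\otimes I)\ket{\Phi^+}=\frac{1}{\sqrt d}\sum_{j}\ket{v_j}\ket{\bar v_j},
\end{equation}
where $\{\ket{v_j}\}_{j=1}^r$ is an orthonormal basis of $V$. This vector has squared norm $r/d$. After normalization and tracing out $R$, the target holds exactly $\Pi/r=\rho$. So the task reduces to preparing this normalized filtered state approximately.

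\textbf{Step 2 (a block encoding of the projector).} $U_\rho$ is an $(\alpha/r,a,0)$-block encoding of $\Pi$, since the encoded block is $\Pi/(r\alpha)$. The singular values of this block are $1/(r\alpha)$ on $V$ and $0$ on $V^\perp$. Apply QSVT with an odd polynomial approximating a sign/threshold function that sends $1/(r\alpha)\mapsto 1$ and keeps $0\mapsto 0$. Rather than resolving a gap, this is singular-value amplification by the factor $r\alpha$ (Gily\'en et al.). It needs degree $O(r\alpha\log(1/\delta))$ and yields a $(1,a+1,\delta)$-block encoding of $\Pi$.

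\textbf{Step 3 (amplitude amplification) and cost.} Use this block encoding of $\Pi$ as the marking operation in amplitude amplification on $\ket{\Phi^+}$ with initial success amplitude $\sqrt{r/d}$. This takes $O(\sqrt{d/r})$ rounds, or fixed-point / oblivious variants to avoid overshooting. Postselecting the ancilla on $\ket{0}^{\otimes a}$ then gives the filtered state with constant probability. Naively the total cost is $O(r\alpha\sqrt{d/r}\log(1/\delta))=O(\alpha\sqrt{rd}\log(1/\delta))$ queries to $\mathcal U_\rho,\mathcal U_\rho^\dagger$, which matches Theorem~\ref{theorem:lower-bound-of-queries} up to the log factor.

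\textbf{Main obstacle.} The difficulty is getting the exact bound $O(\alpha\sqrt{rd})$ with $\varepsilon$-accuracy. Two routes to remove the $\log(1/\delta)$:
\begin{itemize}
\item Merge the amplification and the search into a single QSVT/QSP sequence. Write the target amplitude as the product of the small singular value $1/(r\alpha)$ and the overlap $\sqrt{r/d}$. The composite block $(\bra{0}\otimes I)U_\rho(\ket{0}\otimes I)$ acting on $\ket{\Phi^+}$ has a single relevant singular value $\sqrt{r/d}/(r\alpha)=1/(\alpha\sqrt{rd})$ in the two-dimensional Jordan subspace. This works because $\Pi/(r\alpha)$ applied to $\ket{\Phi^+}$ yields a vector with exactly that norm, with no contribution from $V^\perp$, since the block vanishes there. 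Plain amplitude amplification on this single singular value needs $O(\alpha\sqrt{rd})$ rounds. Because the value is exactly known, exact amplitude amplification (Brassard et al.\ with a tweaked final rotation) reaches success probability $1$ with no log overhead.
\item Alternatively, accept constant success probability as the statement allows: stop amplification at constant success probability, and note that errors from $V^\perp$ are zero because the block annihilates $V^\perp$ exactly.
\end{itemize}
Finally, check that conditioned on success the output is exactly $\rho$, so the trace-norm error is at most $\varepsilon$ for any $\varepsilon$.
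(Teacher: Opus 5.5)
Your first route under ``Main obstacle'' is exactly the paper's proof. It runs amplitude amplification directly on $(U_\rho\otimes I)\ket{0}^{\otimes a}\ket{\Phi^+}$, whose good component has amplitude $1/(\alpha\sqrt{rd})$ and equals $\ket{0}^{\otimes a}$ times the canonical purification of $\Pi/r$, then traces out the reference to obtain $\rho$ exactly with $O(\alpha\sqrt{rd})$ queries. The QSVT singular-value-amplification detour in Steps 2--3 is unnecessary, since it is what introduces the $\log(1/\delta)$ factor, so you can drop it and present the direct argument as the proof.
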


\begin{proof}
Let $\Pi_r$ be the projector onto an $r$-dimensional subspace of the Hilbert space.
We construct a normalized canonical purification of $\rho=\frac{\Pi_r}{r}$, namely,
\begin{equation}
     \ket{\Psi_r}:=\left(\sqrt{\frac{d}{r}}\Pi_r\otimes I\right)\ket{\Phi^+}.
\end{equation}
Since $\rho=\Pi_r/r$, $U_{\Pi_r/r}$ is an $(\alpha,a,0)$-block encoding of $\rho$.
Thus, we obtain
\begin{equation}
    \left(U_{\Pi_r/r}\otimes I\right)
    \ket{0}^{\otimes a}\ket{\Phi^+}
    =
    \frac{1}{\alpha\sqrt{rd}}
    \ket{0}^{\otimes a}\ket{\Psi_r}
    +
    \ket{\psi^\perp},
\end{equation}
where $\ket{\psi^\perp}$ satisfies
\begin{equation}
    \left(
    \ket{0}\!\bra{0}^{\otimes a}\otimes I
    \right)
    \ket{\psi^\perp}
    =0.
\end{equation}

Then we perform amplitude amplification \cite{Brassard2000}.
Let
\begin{equation}
    \sin\theta:=\frac{1}{\alpha\sqrt{rd}}.
\end{equation}
Amplitude amplification increases the success probability of projecting the auxiliary system onto $\ket{0}^{\otimes a}$ to a constant using 
\begin{equation}
    Q=\bigO{\frac{1}{\sin\theta}}=\bigO{\alpha\sqrt{rd}}
\end{equation}
queries in total to $\mathcal{U}_{\Pi_r/r}$ and $\mathcal{U}_{\Pi_r/r}^\dagger$.
Conditioned on obtaining the outcome $\ket{0}^{\otimes a}$ upon measuring the auxiliary system, the remaining system is in the normalized state
$\ket{\Psi_r}$. 
By tracing out the second system of $\ket{\Psi_r}\!\bra{\Psi_r}$, we obtain $\Pi_r/r$ exactly. Hence, the trace-norm error is zero and, in particular, is at most $\varepsilon$.

On the other hand, the maximum eigenvalue of $\Pi_r/r$ is $1/r$.
Thus, Theorem~\ref{theorem:lower-bound-of-queries} gives the lower bound
\begin{equation}
\bigOmega{\alpha r\sqrt{\frac{d}{r}}}=\bigOmega{\alpha\sqrt{rd}},
\end{equation}
for constant error $\varepsilon$.  Therefore, the above algorithm achieves the optimal query complexity up to constant factors.
\end{proof}

The preceding corollary applies to states whose nonzero eigenvalues are all exactly equal.
We next extend this result to states whose nonzero spectrum is nearly flat, as quantified by the condition number.
\begin{Cor}
    Let $\rho$ be a $d$-dimensional density matrix of rank $r$, and let
    \begin{equation}
        \kappa=\frac{\lambda_{\max}(\rho)}{\lambda_{\min}(\rho)}
    \end{equation}
    be its condition number, where $\lambda_{\min}(\rho)$ is the smallest nonzero eigenvalue of $\rho$.
    Suppose that $\kappa\in[1,2)$ and set $\varepsilon'=\kappa-1\in[0,1)$.
    Then, for any $\varepsilon\in(0,1)$ satisfying $\varepsilon\ge\varepsilon'$,  
    there exists an algorithm that recovers $\rho$ to trace-norm error at most $\varepsilon$ with constant success probability using
    \begin{equation}
        \bigO{\frac{\alpha}{\lambda_{\max}(\rho)}\sqrt{\frac{d}{r}}}
    \end{equation}
    queries in total to $\mathcal{U}_\rho$ and $\mathcal{U}_\rho^\dagger$.
\end{Cor}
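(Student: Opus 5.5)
The plan is to reuse the construction from the proof of Corollary~\ref{cor:recover-projection}: prepare a purification of $\rho$ by applying the block encoding to the maximally entangled state and amplifying the flagged branch. Apply $U_\rho\otimes I$ to $\ket{0}^{\otimes a}\ket{\Phi^+}$. Since $(\bra{0}^{\otimes a}\otimes I)U_\rho(\ket{0}^{\otimes a}\otimes I)=\rho/\alpha$, we obtain
\begin{equation}
  \left(U_\rho\otimes I\right)\ket{0}^{\otimes a}\ket{\Phi^+}
  =\frac{1}{\alpha\sqrt{d}}\ket{0}^{\otimes a}(\rho\otimes I)\sum_i\ket{i}\ket{i}+\ket{\psi^\perp}.
\end{equation}
The good branch has squared norm $\Tr(\rho^2)/(\alpha^2 d)$. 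Conditioned on success, the normalized good state is
\begin{equation}
  \ket{\Psi}=\frac{(\rho\otimes I)\sum_i\ket{i}\ket{i}}{\sqrt{\Tr\rho^2}}.
\end{equation}
Its reduced state on the first system is $\sigma:=\rho^2/\Tr(\rho^2)$, not $\rho$ itself. The approach therefore has two parts: bound the amplification cost, and bound $\|\sigma-\rho\|_1$ in terms of $\kappa$.

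\textbf{Query cost.} We have $\sin\theta=\sqrt{\Tr\rho^2}/(\alpha\sqrt d)$. Every nonzero eigenvalue satisfies $\lambda_i\ge\lambda_{\max}/\kappa>\lambda_{\max}/2$, so
\begin{equation}
  \Tr\rho^2\ge r\lambda_{\max}^2/4 .
\end{equation}
Hence $1/\sin\theta\le 2\alpha/(\lambda_{\max}\sqrt{rd})$. Using $1=\sum_i\lambda_i\le r\lambda_{\max}$, i.e.\ $1/\sqrt r\le\sqrt r\,\lambda_{\max}$, this gives
\begin{equation}
  \frac{1}{\sin\theta}=O\!\left(\alpha\sqrt{\frac rd}\cdot\frac{1}{\lambda_{\max}\, r}\cdot r\right)=O\!\left(\frac{\alpha}{\lambda_{\max}}\sqrt{\frac dr}\right).
\end{equation}
One can check this directly: $\alpha\sqrt d/\sqrt{\Tr\rho^2}\le 2\alpha\sqrt d/(\lambda_{\max}\sqrt r)$. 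Since $\sin\theta$ is not known exactly, I would use fixed-point or exponential-search amplitude amplification~\cite{Brassard2000}. This reaches constant success probability with $O(1/\sin\theta)$ queries to $\mathcal U_\rho$ and $\mathcal U_\rho^\dagger$. Upon success we trace out the second register, which leaves $\sigma$.

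\textbf{Error bound.} This is the main step. Write $\lambda_i\in[m,\kappa m]$ with $m=\lambda_{\min}$. Then
\begin{equation}
  \|\sigma-\rho\|_1=\sum_i\lambda_i\left|\frac{\lambda_i}{\Tr\rho^2}-1\right|.
\end{equation}
Note that $\Tr\rho^2=\sum_i\lambda_i\cdot\lambda_i$ is a $\rho$-weighted average of the eigenvalues, so it lies in $[m,\kappa m]$. Consequently each ratio $\lambda_i/\Tr\rho^2$ lies in $[1/\kappa,\kappa]$, and each term obeys $|\lambda_i/\Tr\rho^2-1|\le\kappa-1$. Summing with weights $\lambda_i$ gives
\begin{equation}
  \|\sigma-\rho\|_1\le\kappa-1=\varepsilon'\le\varepsilon .
\end{equation}
The bound $1-1/\kappa\le\kappa-1$ covers the lower side of the interval, so both directions are handled.

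The main obstacle I expect is the amplification step. Exact amplitude amplification requires knowing $\theta$, but only bounds on $\theta$ are available here. The standard unknown-amplitude variant with randomized iteration counts should handle this at the same query cost, and it also fits the ``constant success probability'' phrasing of the statement. If the error estimate turns out to be too loose for some spectra, the fallback is to compare $\sigma$ and $\rho$ through the fidelity. I do not expect this to be needed, since the averaging argument above is already linear in $\kappa-1$.
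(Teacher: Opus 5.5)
Your proposal is correct and follows the paper's proof step for step: the same purification-plus-amplitude-amplification construction yielding $\rho^2/\Tr\rho^2$, the same observation $\lambda_{\min}\le\Tr\rho^2\le\lambda_{\max}$ giving $\|\rho^2/\Tr\rho^2-\rho\|_1\le\kappa-1$, and the same lower bound on $\Tr\rho^2$ for the query count (you use $\lambda_i>\lambda_{\max}/2$ where the paper uses $\kappa\le 1+\varepsilon$), while your remark on unknown-amplitude amplification addresses a point the paper glosses over. The one slip is your first displayed chain for $1/\sin\theta$, whose middle expression $\alpha\sqrt{r/d}/\lambda_{\max}$ is not a valid bound and which needlessly invokes $1\le r\lambda_{\max}$; the direct estimate $\alpha\sqrt d/\sqrt{\Tr\rho^2}\le 2\alpha\sqrt d/(\lambda_{\max}\sqrt r)$ that you give right after is correct and suffices.
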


\begin{proof}
The strategy is to prepare the state
$\rho^2/\Tr\rho^2$ by applying the block-encoding unitary channel
to one half of the maximally entangled state and performing amplitude
amplification, as in Corollary~\ref{cor:recover-projection}.
We first show that this state is sufficiently close to $\rho$.

Let $\{\lambda_i\}_{i=1}^r$ denote the nonzero eigenvalues of $\rho$.
Then,
\begin{align}
    \left\|\frac{\rho^2}{\Tr\rho^2}-\rho\right\|_1
    &=
    \sum_i
    \left|
        \frac{\lambda_i^2}{\sum_j\lambda_j^2}
        -\lambda_i
    \right|
    \nonumber\\
    &=
    \sum_i
    \lambda_i
    \left|
        \frac{\lambda_i}{\sum_j\lambda_j^2}-1
    \right|.
\end{align}
Since
\begin{equation}
    \lambda_{\min}(\rho)
    \le
    \sum_j\lambda_j^2
    \le
    \lambda_{\max}(\rho),
\end{equation}
we have, for every $i$,
\begin{equation}
    \frac{1}{\kappa}-1
    \le
    \frac{\lambda_i}{\sum_j\lambda_j^2}-1
    \le
    \kappa-1.
\end{equation}
Therefore,
\begin{equation}
    \left|
        \frac{\lambda_i}{\sum_j\lambda_j^2}-1
    \right|
    \le
    \kappa-1,
\end{equation}
and hence
\begin{equation}
    \left\|
        \frac{\rho^2}{\Tr\rho^2}-\rho
    \right\|_1
    \le
    \kappa-1
    =
    \varepsilon'
    \le
    \varepsilon.
\end{equation}

We next analyze the query complexity.
The amplitude of the good ancilla subspace is
\begin{equation}
    \left\|
        \left(\frac{\rho}{\alpha}\otimes I\right)
        \ket{\Phi^+}
    \right\|_2
    =
    \frac{\sqrt{\Tr\rho^2}}{\alpha\sqrt{d}}.
\end{equation}
Thus, amplitude amplification prepares
$\rho^2/\Tr\rho^2$ with constant success probability using
\begin{equation}
    \bigO{
        \alpha\sqrt{\frac{d}{\Tr\rho^2}}
    }
\end{equation}
queries in total to $\mathcal{U}_\rho$ and
$\mathcal{U}_\rho^\dagger$.

Since
\begin{equation}
    \Tr\rho^2
    =
    \sum_{i=1}^r\lambda_i^2
    \ge
    r\lambda_{\min}(\rho)^2,
\end{equation}
we obtain
\begin{align}
    \frac{1}{\sqrt{\Tr\rho^2}}
    &\le
    \frac{1}{\sqrt{r}\lambda_{\min}(\rho)}
    =
    \frac{\kappa}{\sqrt{r}\lambda_{\max}(\rho)}
    \nonumber\\
    &\le
    \frac{1+\varepsilon}
    {\sqrt{r}\lambda_{\max}(\rho)},
\end{align}
where the last inequality follows from
$\kappa-1=\varepsilon'\le\varepsilon$.
Therefore, the query complexity is bounded by
\begin{equation}
    \bigO{
        \frac{\alpha}{\lambda_{\max}(\rho)}
        \sqrt{\frac{d}{r}}
        (1+\varepsilon)
    }
    =
    \bigO{
        \frac{\alpha}{\lambda_{\max}(\rho)}
        \sqrt{\frac{d}{r}}
    },
\end{equation}
where we used $\varepsilon\in(0,1)$ in the last equality.
\end{proof}

These results show that state recovery incurs an unavoidable dimension-dependent cost. Nevertheless, in the regimes where our state-recovery algorithms apply, they can still offer a substantial query-complexity advantage over an incoherent strategy based on full quantum process tomography \cite{Mauro2003}.  In such a strategy, one first performs full quantum process tomography of the block-encoding unitary channel to  obtain a classical description of the channel, extracts a classical description of $\rho$ from it, and then uses the description to prepare $\rho$ as a quantum state. 

To quantify this advantage, let $d=2^n$ be the dimension of the $n$-qubit system Hilbert space, and let $D=2^ad=2^{a+n}$ denote the dimension of the joint 
Hilbert space on which the $(\alpha, a,0)$-block-encoding unitary acts. To guarantee
trace-norm error at most $\varepsilon$ in the recovered state, it
suffices to reconstruct the full block-encoding unitary channel to
diamond-norm error
\begin{equation}
    \delta
    =
    O\!\left(\frac{\varepsilon}{\alpha d}\right).
\end{equation}
Here, we use the equivalence between diamond distance for unitary
channels and operator-norm distance up to a global phase, together with
the generic inequality $\|X\|_1\le d\|X\|_\infty$.

General-purpose tomography of a $D$-dimensional unitary channel to
diamond-norm error $\delta$ has optimal query complexity
$\Theta(D^2/\delta)$ \cite{Haah2023}. Therefore, a full-tomography
strategy operated at this accuracy has query complexity
\begin{equation}
    \Theta\!\left(
        \frac{\alpha D^2 d}{\varepsilon}
    \right).
\end{equation}
For constant $\alpha$ and $a$, this reduces to
$\Theta(d^3/\varepsilon)$, demonstrating the advantage of our
state-recovery algorithms over this general-purpose tomography-based
approach in the regimes where they apply.

\section{Applications to ground-state and Gibbs-state preparation}
The search-reduction technique used to prove Theorem~\ref{theorem:lower-bound-of-queries} can also be applied to other state-generation tasks. In particular, by constructing Hamiltonians whose ground or Gibbs states encode a marked computational-basis element, the same technique yields task-specific lower bounds. We present two such applications below: ground-state preparation and Gibbs-state preparation.

For ground-state preparation, we compare our result with the eigenstate-filtering framework of Ref.~\cite{Lin2020}. Their QSVT-based algorithm achieves essentially optimal dependence on the initial overlap \(\gamma\) and the spectral gap \(\Delta\). Their lower-bound theorem establishes optimality in two one-parameter regimes, namely, \(\Omega(1/\gamma)\) for constant \(\Delta\) and \(\Omega(1/\Delta)\) for constant \(\gamma\). The corollary below is complementary and establishes a stronger joint lower bound in the simultaneous small-overlap and small-gap regime. We construct a diagonal hard family for which a marked phase of size \(\Theta(\Delta/\alpha)\), together with the uniform initial state satisfying \(\gamma=1/\sqrt d\), forces
$
\Omega(\alpha\sqrt d/\Delta)=
\Omega(\alpha/(\gamma\Delta))
$
block-encoding queries. Thus, in the block-encoding query model and for \(\alpha=\Theta(1)\), the \(1/(\gamma\Delta)\) scaling is not merely an artifact of filtering algorithms.

For Gibbs-state preparation, we compare our result with previous results \cite{Gilyen2019,Wang2025a}. Existing results give a worst-case upper bound of order \(\widetilde{O}(\beta\sqrt d)\), together with lower bounds that include the separate obstructions \(\widetilde{\Omega}(\beta)\) and \(\Omega(\sqrt d)\). These separate lower bounds do not by themselves imply a product dependence. Corollary~\ref{cor:gibbs-specific-bound} shows that, in the low-temperature regime, a diagonal marked Hamiltonian enforces the product-type lower bound
$
\Omega(\beta\sqrt d/\log d)=
\widetilde{\Omega}(\beta\sqrt d).
$
Thus, the corollary establishes a worst-case lower bound for preparing the Gibbs state itself and shows that the temperature and dimension costs can arise simultaneously rather than only as separate obstructions.

\begin{task}[Ground-state preparation from a block encoding]
Let $H$ be a Hamiltonian with a unique ground state $\ket{z}$ and spectral gap at least $\Delta$, and let $U_H$ be an $(\alpha,a,0)$-block encoding of $H$.
Given query access to block-encoding unitary channels $\mathcal{U}_H$ and $\mathcal{U}_H^\dagger$ induced by $U_H$ and $U_H^\dagger$, respectively, and an initial state $\ket{\phi_0}$ satisfying $|\braket{z|\phi_0}|=\gamma$, the goal is to output a state satisfying
\begin{equation}
    \bra{z}\hat{\rho}\ket{z}\ge\frac{3}{4}
\end{equation}.
\end{task}

\begin{Cor}[Ground-state preparation]
\label{cor:ground-state-specific-bound}
Let $\alpha=\Theta(1)$ and $0<\Delta\le \alpha$.
There exists a family of Hamiltonians $H_z=-\Delta \ket{z}\!\bra{z}$ and initial states
\begin{equation}
    \ket{\phi_0}=\frac{1}{\sqrt d}\sum_{x=1}^d\ket{x},
\end{equation}
for which any algorithm solving the ground-state preparation task with constant success probability requires
\begin{equation}
    \Omega\!\left(\frac{\sqrt d}{\Delta}\right)
    =\Omega\!\left(\frac{1}{\gamma\Delta}\right)
\end{equation}
queries in total to $\mathcal{U}_H$ and $\mathcal{U}_H^\dagger$.
\end{Cor}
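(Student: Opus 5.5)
We reduce unstructured search to ground-state preparation, following the proof of Theorem~\ref{theorem:lower-bound-of-queries}. Fix an unknown marked element $z\in\{1,\dots,d\}$ and set $H_z=-\Delta\ket{z}\!\bra{z}$. This Hamiltonian is diagonal and has unique ground state $\ket{z}$ with energy $-\Delta$. All other eigenvalues equal $0$, so the spectral gap is exactly $\Delta$. The initial state $\ket{\phi_0}$ is the uniform superposition, so $\gamma=|\braket{z|\phi_0}|=1/\sqrt d$, and $\ket{\phi_0}$ can be prepared without any queries.

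Next we simulate the block encoding using a phase oracle. Define
\begin{equation}
f(x)=\begin{cases}-\arcsin(\Delta/\alpha) & (x=z),\\ 0 & (x\ne z).\end{cases}
\end{equation}
This is well defined because $\Delta\le\alpha$. The marking is an $O(1)$ modification of the construction used for Theorem~\ref{theorem:lower-bound-of-queries}, with $\lambda_z$ replaced by $-\Delta$; the sign can also be absorbed into a global phase or a reflection. By the construction cited from Ref.~\cite{Gilyen2019} (Lemma~\ref{lem:block-encoding}), $O(1)$ uses of $c\text{-}O_f$ and $c\text{-}O_f^\dagger$ implement an $(\alpha,a,0)$-block encoding $U_{H_z}$ of the diagonal matrix $\sum_x \alpha\sin f(x)\ket{x}\!\bra{x}=H_z$. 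The same bound holds for its inverse. Hence a ground-state preparation algorithm that makes $Q$ queries to $\mathcal{U}_{H_z}$ and $\mathcal{U}_{H_z}^\dagger$ becomes an algorithm that makes $O(Q)$ queries to the controlled phase oracle. Measuring its output $\hat\rho$ in the computational basis returns $z$ with probability at least $3/4$, or a constant when the algorithm itself succeeds only with constant probability.

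It remains to lower-bound search with a small-phase oracle. I expect this to be the main step, and it is the same search lower bound invoked in the proof of Theorem~\ref{theorem:lower-bound-of-queries}. There, with $r=1$ and $\lambda_z/\alpha$ replaced by $\Delta/\alpha$, the bound reads $\Omega((\alpha/\Delta)\sqrt d)$. I would check it directly with a hybrid argument in the style of Bennett et al.~\cite{Bennett1997}. Compare a run of the algorithm with the oracle $O_f$ against a run with the identity oracle. At query $t$ the two oracles differ only on $\ket{z}$, and only by the phase $|e^{if(z)}-1|\le|f(z)|=O(\Delta/\alpha)$. The control qubit and the inverse queries do not change this difference, so each query contributes an error of at most $O(\Delta/\alpha)\,\|\Pi_z\ket{\psi_t}\|$, where $\Pi_z$ projects onto $\ket{z}$. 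Summing over the $Q'=O(Q)$ queries, averaging over $z$, and applying Cauchy--Schwarz with $\sum_z\|\Pi_z\ket{\psi_t}\|^2\le1$ shows that the average deviation is $O(Q'\Delta/(\alpha\sqrt d))$. Success with constant probability for every $z$ requires this deviation to be a constant, so $Q'=\Omega(\alpha\sqrt d/\Delta)$.

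Since $\alpha=\Theta(1)$, we get $Q=\Omega(\sqrt d/\Delta)=\Omega(1/(\gamma\Delta))$. The point that needs the most care is that the block-encoding simulation uses only $O(1)$ oracle calls and preserves exactness, $(\alpha,a,0)$, for the signed diagonal entry, since any loss there would weaken the reduction.
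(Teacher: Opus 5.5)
Your proposal is correct and follows the paper's proof essentially step for step. You realize the block encoding of $H_z$ with $O(1)$ controlled-phase-oracle calls via the circuit of Lemma~\ref{lem:block-encoding}, read off $z$ by a computational-basis measurement, and apply the single-marked-element hybrid bound of Lemma~\ref{Lem:BVVV} with marked phase $\Theta(\Delta/\alpha)$. Choosing the negative phase $-\arcsin(\Delta/\alpha)$, which works because the circuit's $\sin f(x)$ identity holds for any real phase, is just a cleaner way of realizing the paper's ``fixed sign flip''.
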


\begin{proof}
For a marked element $z\in[d]$, define a phase oracle by
\begin{equation}
    O_z\ket{x}=e^{if_z(x)}\ket{x},\qquad
    f_z(x)=
    \begin{cases}
    \arcsin(\Delta/\alpha) & (x=z),\\
    0 & (x\ne z).
    \end{cases}
\end{equation}
The construction of Lemma~\ref{lem:block-encoding}, together with an additional fixed sign flip, implements an $(\alpha, a,0)$-block encoding of $H_z=-\Delta\ket{z}\!\bra{z}$ using only $O(1)$ queries to the controlled phase oracle and its inverse.
The state $\ket{\phi_0}=d^{-1/2}\sum_x\ket{x}$ has overlap $\gamma=1/\sqrt d$ with the ground state $\ket{z}$.

Suppose that the ground-state preparation task could be solved using $o(1/(\gamma\Delta))=o(\sqrt d/\Delta)$ block-encoding queries.  Then measuring the output in the computational basis would identify $z$ with probability at least $3/4$.
This would solve the corresponding marked-element search problem using $o(\sqrt d/\Delta)$ queries to the controlled phase oracle and its inverse.  However, the same hybrid argument as Lemma~\ref{Lem:BVVV}, specialized to a single marked element with marked phase $\Theta(\Delta/\alpha)$, gives a lower bound
\begin{equation}
    \Omega\left(\frac{\alpha\sqrt d}{\Delta}\right)
    =
    \Omega\left(\frac{\sqrt d}{\Delta}\right),
\end{equation}
where we used $\alpha=\Theta(1)$.  This is a contradiction.
\end{proof}

\begin{task}[Gibbs-state preparation from a block encoding]
Let $H$ be a Hamiltonian and let $\beta>0$ be the inverse temperature and let $U_H$ be an $(\alpha,a,0)$-block encoding of $H$.
Given query access to block-encoding unitary channels $\mathcal{U}_H$ and $\mathcal{U}_H^\dagger$ induced by $U_H$ and $U_H^\dagger$, respectively, the goal is to output a quantum state $\hat{\rho}$ close in trace norm to the Gibbs state
\begin{equation}
    \rho_{\mathrm{Gibbs}}(H)=\frac{e^{-\beta H}}{\Tr e^{-\beta H}}.
\end{equation}
\end{task}

\begin{Cor}[Low-temperature Gibbs-state preparation]
\label{cor:gibbs-specific-bound}
Let $\alpha=\Theta(1)$.
In the low-temperature regime $\beta\ge\log d$, there exists a family of diagonal Hamiltonians for which preparing the Gibbs state to constant trace-norm error requires
\begin{equation}
    \Omega\!\left(\frac{\beta\sqrt d}{\log d}\right)
    =\bigtilOmega{\beta\sqrt d}
\end{equation}
queries in total to $\mathcal{U}_H$ and $\mathcal{U}_H^\dagger$.
\end{Cor}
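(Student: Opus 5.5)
We reduce from marked-element search, in the same way as the ground-state corollary (Corollary~\ref{cor:ground-state-specific-bound}). The hard family is the diagonal marked Hamiltonians
\begin{equation}
    H_z=-\Delta\ket{z}\!\bra{z},\qquad z\in[d],
\end{equation}
with a gap parameter $\Delta\in(0,\alpha]$ that we fix later in terms of $\beta$ and $d$. Each such $H_z$ has a block encoding built from $O(1)$ queries to the controlled phase oracle $c\text{-}O_{f_z}$ with $f_z(z)=\arcsin(\Delta/\alpha)$ and $f_z(x)=0$ otherwise, together with its inverse. This is Lemma~\ref{lem:block-encoding} plus a fixed sign flip, exactly as in the ground-state proof. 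As a result, any Gibbs-preparation algorithm that makes $Q$ queries to $\mathcal{U}_{H_z}$ and $\mathcal{U}_{H_z}^\dagger$ becomes a phase-oracle algorithm making $O(Q)$ queries.

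Next we check that a good Gibbs state reveals $z$. The Gibbs state of $H_z$ is diagonal:
\begin{equation}
    \rho_{\mathrm{Gibbs}}(H_z)=\frac{e^{\beta\Delta}\ket{z}\!\bra{z}+\sum_{x\neq z}\ket{x}\!\bra{x}}{e^{\beta\Delta}+d-1}.
\end{equation}
The weight on $z$ is
\begin{equation}
    p_z=\frac{e^{\beta\Delta}}{e^{\beta\Delta}+d-1}.
\end{equation}
We choose $\Delta=c\log d/\beta$ with a constant $c\ge 2$. The hypothesis $\beta\ge\log d$ ensures $\Delta\le c$, which is $\le\alpha$ after adjusting constants; this is the only place the low-temperature assumption enters. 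With this choice $e^{\beta\Delta}=d^{c}$, so $p_z\ge 1-1/d^{c-1}$, which is at least $0.9$ for large $d$. Suppose $\hat\rho$ satisfies $\|\hat\rho-\rho_{\mathrm{Gibbs}}\|_1\le\varepsilon$ for a small enough constant $\varepsilon$. Measuring $\hat\rho$ in the computational basis then returns $z$ with probability at least $p_z-\varepsilon/2$, a constant above $1/2$. So we obtain a search algorithm for a single marked element whose marked phase is $\theta=\arcsin(\Delta/\alpha)=\Theta(\Delta/\alpha)$.

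Finally we invoke the search lower bound. As in the ground-state corollary, the hybrid argument of Lemma~\ref{Lem:BVVV}, specialized to one marked element with marked phase $\theta$, requires $\Omega(\sqrt d/\theta)=\Omega(\alpha\sqrt d/\Delta)$ queries for constant success. Substituting $\Delta=\Theta(\log d/\beta)$ and $\alpha=\Theta(1)$ gives
\begin{equation}
    \Omega\!\left(\frac{\beta\sqrt d}{\log d}\right)
\end{equation}
block-encoding queries, which is the claim.

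The main obstacle is the choice of $\Delta$. It must be small, since the lower bound scales as $1/\Delta$, yet large enough that the Gibbs state puts constant weight on $z$ despite $d-1$ competing basis states. This forces $\beta\Delta\gtrsim\log d$, and that requirement is exactly where the $\log d$ loss comes from. We must also check that the phase-oracle lower bound holds with the controlled oracle and its inverse available, as Lemma~\ref{Lem:BVVV} is used in the earlier proofs.
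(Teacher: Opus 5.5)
Your proposal is correct and follows essentially the paper's route: a diagonal single-marked-element Hamiltonian block-encoded with $O(1)$ controlled phase-oracle queries via Lemma~\ref{lem:block-encoding}, a constant Gibbs weight on $\ket{z}$ forced by taking $\beta\cdot(\text{gap})=\Theta(\log d)$, and the hybrid bound of Lemma~\ref{Lem:BVVV} with marked phase $\Theta(\log d/(\alpha\beta))$.

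The paper's Hamiltonian is different: it uses $H_z=\frac{\Delta}{d\beta}I-\frac{\Delta}{\beta}\bigl(1+\frac1d\bigr)\ket{z}\!\bra{z}$ with $\Delta=\log d$, built by a two-term LCU and giving $p_z\approx1/2$, whereas your pure projector avoids the LCU and gives $p_z\to1$. Both versions need the norm condition $\alpha\gtrsim\log d/\beta$ (the paper states it as $\alpha\ge c+m$), so your ``adjusting constants'' should be written explicitly as $\alpha\ge c$. Any fixed $c\ge1$ suffices, because the hybrid bound only needs constant success probability.
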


\begin{proof}
Choose $\Delta=\log d$ and assume $\beta\ge\Delta$.
For each marked element $z\in \{1,2, \cdots, d\}$, consider the diagonal Hamiltonian
\begin{equation}
    H_z=-\frac{\Delta}{\beta}\ket{z}\!\bra{z}
    +\frac{\Delta}{d\beta}\sum_{x\ne z}\ket{x}\!\bra{x}.
\end{equation}
Its Gibbs state is diagonal and assigns probability
\begin{equation}
    p_z=\frac{e^{\Delta}}{e^{\Delta}+(d-1)e^{-\Delta/d}}
\end{equation}
to the marked basis vector $\ket{z}$.  This probability is bounded below by a positive constant for this choice of $\Delta$.

Write
\begin{equation}
    H_z=cI-m\proj{z},
    \qquad
    c:=\frac{\Delta}{d\beta},
    \quad
    m:=\frac{\Delta}{\beta}\left(1+\frac{1}{d}\right).
\end{equation}
Assume $\alpha\ge c+m$ and set $\alpha_m:=\alpha-c$.
By Lemma~\ref{lem:block-encoding}, an
$(\alpha_m/m,a,0)$-block encoding $U_z$ of $\proj{z}$ can be
implemented using $O(1)$ queries to a controlled phase oracle and its
inverse.  Its marked phase is
\begin{equation}
    \theta=
    \arcsin\left(\frac{m}{\alpha_m}\right)=\bigtheta{\frac{\Delta}{\alpha\beta}}.
\end{equation}
Combining $-U_z$ with the known identity operation by the standard
two-term LCU construction gives
\begin{equation}
    \frac{c}{\alpha}I-\frac{\alpha_m}{\alpha}\frac{m}{\alpha_m}\proj{z}=\frac{H_z}{\alpha}.
\end{equation}
Thus, an $(\alpha,a+1,0)$-block encoding $U_{H_z}$ of $H_z$ can be
implemented with $O(1)$ oracle queries.

Therefore, the same phase-search lower-bound argument as
Lemma~\ref{Lem:BVVV} gives
\begin{equation}
    \Omega\left(\frac{\sqrt d}{\theta}\right)=\Omega\left(\frac{\alpha\beta}{\Delta}\sqrt d\right)
\end{equation}
queries in total to the controlled phase oracle and its inverse.

Suppose that a Gibbs-state preparation algorithm produces a state within a sufficiently small constant trace-norm error of $\rho_{\mathrm{Gibbs}}(H_z)$ using $o(\beta\sqrt d/\Delta)$ queries in total to the block-encoding unitary channels $\mathcal{U}_{H_z}$ and its inverse $\mathcal{U}_{H_z}^\dagger$.
Then, measuring the output state in the computational basis would identify $z$ with constant probability.
This contradicts the search lower bound.
Substituting $\Delta=\log d$ gives the claimed $\Omega(\beta\sqrt d/\log d)$ lower bound.
\end{proof}

\section{Conclusion}
In this work, we investigated the convertibility between sample access to unknown quantum states and query access to their block-encoding unitary channels.
First, for $\alpha=\Theta(1)$, we established an $\bigOmega{1/\varepsilon}$ lower bound on the number of copies required to implement an $\varepsilon$-approximation to the unitary channel induced by an $(\alpha, a, 0)$-block encoding of an unknown state $\rho$ and to its inverse channel. For fixed $\alpha$, this lower bound matches the known upper bound up to logarithmic factors.
Second, we established an $\bigOmega{(1/\lambda_{\max}(\rho))\sqrt{d/r}}$ lower bound, for $\alpha=\bigtheta{1}$ and constant recovery error,  on the number of queries required to recover an unknown rank-$r$ quantum state $\rho$ on a $d$-dimensional Hilbert space with the maximum eigenvalue $\lambda_{\max}(\rho)$ from query access to its block-encoding unitary channel and its inverse. In particular, state recovery exhibits an unavoidable dependence on the dimension even for pure states. We also identified 
states with nearly flat spectra as a regime in which state recovery can be achieved with optimal query complexity and showed that, in this regime, our algorithms can outperform a general-purpose approach based on full unitary-channel tomography.

We further applied the same lower-bound technique to two standard
state-generation tasks: ground-state preparation and Gibbs-state
preparation.
For ground-state preparation, we showed that, for $\alpha=\Theta(1)$, the joint dependence $\bigOmega{1/(\gamma\Delta)}$ on the initial overlap $\gamma$ and the spectral gap $\Delta$ is unavoidable for a simple diagonal marked family, complementing QSVT-based eigenstate-filtering results \cite{Lin2020}.
For Gibbs-state preparation, we established, in the low-temperature regime, a worst-case lower bound $\Omega(\beta\sqrt d/\log d)=\bigtilOmega{\beta\sqrt d}$ for algorithms that output the Gibbs state itself \cite{Gilyen2019, Wang2025a}. This scaling matches the known worst-case scaling up to logarithmic
factors.  These applications indicate that the dimension-dependent obstruction revealed by the state-recovery problem is not confined to the recovery of arbitrary unknown density matrices, but also arises in standard state-generation tasks with physically meaningful outputs, such as low-energy states or thermal states of a Hamiltonian.

For general quantum states, whether the lower bound can be matched by an upper bound remains open. Ref.~\cite{utsumi2025} gives a related construction of a canonical purification based on a block encoding of $\sqrt{\rho}$.  However, the query complexity of implementing this block encoding does not generally match our lower bound. Closing this gap remains an important open problem.

\section*{Acknowledgments}
This work was supported by Japan Society for the
Promotion of Science (JSPS) KAKENHI Grant Numbers 23K21643, the MEXT Quantum Leap Flagship Program (MEXT QLEAP) JPMXS0118069605, JPMXS0120351339, JST CREST Grant Number JPMJCR25I5, JST ASPIRE Grant Number JPMJAP25A3, JST NEXUS Grant Number JPMJNX26C9, FoPM, WINGS Program, the University of Tokyo, and IBM Quantum.

\onecolumngrid

\appendix
\section{Proof of Theorem~\ref{Theorem:lower bound of samples}}
\label{appendix:state-to-query}
Let $\rho_+$ and $\rho_-$ be the qubit states to be discriminated, defined by
\begin{equation}
	\rho_+:=
	\begin{pmatrix}
		\frac{1}{2}+\delta & 0 \\
		0 & \frac{1}{2}-\delta
	\end{pmatrix},
	\quad \rho_-:=
	\begin{pmatrix}
		\frac{1}{2}-\delta & 0 \\
		0 & \frac{1}{2}+\delta
	\end{pmatrix}.
	\label{equation:discriminated quantum state}
\end{equation}
Here, $\delta\in(0,1/4)$.
We evaluate the sample complexity required to distinguish them.

\begin{Lem}
	\label{Lemma:lower bound of sample to discriminate the states}
	Suppose we are promised that a quantum state $\rho$ is either $\rho_+$ or $\rho_-$.
	To correctly determine whether $\rho$ is $\rho_+$ or $\rho_-$ with success probability at least $2/3$,
	\begin{equation}
		\bigOmega{\frac{1}{\delta^2}}
	\end{equation}
	copies of $\rho$ are required.
\end{Lem}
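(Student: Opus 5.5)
The plan is the standard route: bound the trace distance between $\rho_+^{\otimes n}$ and $\rho_-^{\otimes n}$ via a fidelity argument, then apply Helstrom's theorem. By the Helstrom bound, the optimal success probability of discriminating $\rho_+^{\otimes n}$ from $\rho_-^{\otimes n}$ with equal priors is
\begin{equation}
  p_{\mathrm{succ}}=\frac{1}{2}+\frac{1}{2}D_{\mathrm{tr}}\bigl(\rho_+^{\otimes n},\rho_-^{\otimes n}\bigr).
\end{equation}
Since any algorithm succeeding with probability at least $2/3$ on both hypotheses also succeeds with probability at least $2/3$ on average, we need $D_{\mathrm{tr}}(\rho_+^{\otimes n},\rho_-^{\otimes n})\ge 1/3$. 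It therefore suffices to show that this trace distance is small unless $n=\Omega(1/\delta^2)$.

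Because the two states commute (both are diagonal in the computational basis), the problem reduces to classical distributions. Specifically, $\rho_\pm^{\otimes n}$ correspond to the product Bernoulli distributions $P_\pm^{n}$ with $P_\pm=\mathrm{Bern}(1/2\pm\delta)$, and the trace distance equals $d_{\mathrm{TV}}(P_+^n,P_-^n)$. I would then use the Fuchs--van de Graaf inequality
\begin{equation}
  D_{\mathrm{tr}}\le\sqrt{1-F^2},
\end{equation}
with fidelity $F(\rho_+^{\otimes n},\rho_-^{\otimes n})=F(\rho_+,\rho_-)^n$ by multiplicativity. For diagonal states, $F(\rho_+,\rho_-)=2\sqrt{(1/2+\delta)(1/2-\delta)}=\sqrt{1-4\delta^2}$. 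Hence $F^{2}$ for the $n$-copy states is $(1-4\delta^2)^n$, giving
\begin{equation}
  D_{\mathrm{tr}}\bigl(\rho_+^{\otimes n},\rho_-^{\otimes n}\bigr)\le\sqrt{1-(1-4\delta^2)^n}\le\sqrt{4n\delta^2}=2\delta\sqrt{n},
\end{equation}
using Bernoulli's inequality $(1-x)^n\ge 1-nx$.

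Requiring $2\delta\sqrt n\ge 1/3$ yields $n\ge 1/(36\delta^2)$, which is the claimed $\Omega(1/\delta^2)$. Alternatively, since the paper defines the KL divergence, one could use Pinsker's inequality with additivity, $D_{\mathrm{KL}}(P_+^n\|P_-^n)=n\,D_{\mathrm{KL}}(P_+\|P_-)$. This requires bounding $D_{\mathrm{KL}}(P_+\|P_-)=2\delta\log\frac{1+2\delta}{1-2\delta}=O(\delta^2)$ for $\delta<1/4$, which is the only mildly fiddly step; the fidelity route avoids it.

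No real obstacle is expected. The only care needed is the justification that general (possibly entangled, adaptive) measurements on $n$ copies are covered, which is exactly what the Helstrom bound applied to the $n$-fold tensor product ensures.
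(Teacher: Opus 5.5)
Your proof is correct. It shares the paper's outer structure: the Helstrom bound on the $n$-fold tensor product, then the requirement $D_{\mathrm{tr}}(\rho_+^{\otimes n},\rho_-^{\otimes n})\ge 1/3$. You differ in how you bound the $n$-copy distinguishability.

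The paper reduces to the classical distributions $p_\pm^{\otimes n}$ using that $\rho_\pm$ commute. It then applies Pinsker's inequality with additivity of the KL divergence. To get $D_{\mathrm{KL}}(p_+\|p_-)\le 12\delta^2$ it needs the estimates $\log(1+2\delta)\le 2\delta$ and $-\log(1-2\delta)\le 4\delta$ for $\delta<1/4$. This yields $d_{\mathrm{TV}}\le\sqrt{6n\delta^2}$, hence $n\ge 1/(54\delta^2)$.

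You instead use multiplicativity of fidelity and the Fuchs--van de Graaf inequality. Since $F(\rho_+,\rho_-)^2=1-4\delta^2$ exactly, Bernoulli's inequality gives $D_{\mathrm{tr}}\le 2\delta\sqrt{n}$, hence $n\ge 1/(36\delta^2)$.

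Your route is slightly cleaner. It avoids the logarithm estimates and the restriction $\delta<1/4$ they require, and it gives a marginally better constant. It also does not need commutativity of $\rho_\pm$, since Fuchs--van de Graaf and fidelity multiplicativity hold for arbitrary quantum states. Your reduction to Bernoulli distributions is therefore harmless but superfluous.

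You also explicitly note that worst-case success at least $2/3$ implies equal-prior average success at least $2/3$, which the paper leaves implicit when invoking Helstrom. The paper's KL/Pinsker argument is the more standard classical hypothesis-testing template. Both routes give the same $\Omega(1/\delta^2)$ bound.
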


\begin{proof}
	The optimal strategy for maximizing the success probability of identifying $\rho$ is the Helstrom measurement \cite{Helstrom1969, Holevo1973}.
	If we are allowed to use $n_{\mathrm{disc}}$ copies of $\rho$, the success probability $p_{\mathrm{success}}$ satisfies
	\begin{equation}
		p_{\mathrm{success}} \le \frac{1}{2} +\frac{1}{4} \left\| \rho_+^{\otimes n_{\mathrm{disc}}} -\rho_-^{\otimes n_{\mathrm{disc}}} \right\|_1.
	\end{equation}
	Since $\rho_+$ and $\rho_-$ commute, their trace distance is the same as the classical total variation distance up to the conventional factor of two:
	\begin{equation}
		\left\| \rho_+^{\otimes n_{\mathrm{disc}}} -\rho_-^{\otimes n_{\mathrm{disc}}} \right\|_1 = 2d_{\mathrm{TV}} \left( p_+^{\otimes n_{\mathrm{disc}}}, p_-^{\otimes n_{\mathrm{disc}}} \right),
	\end{equation}
	where
	\begin{equation}
		p_+=\left(\frac{1}{2}+\delta,\frac{1}{2}-\delta\right), \qquad p_-=\left(\frac{1}{2}-\delta,\frac{1}{2}+\delta\right).
	\end{equation}
	Using Pinsker's inequality \cite{Brillinger1964},
	\begin{align}
		d_{\mathrm{TV}} \left( p_+^{\otimes n_{\mathrm{disc}}}, p_-^{\otimes n_{\mathrm{disc}}} \right) &\le \sqrt{ \frac{1}{2} D_{\mathrm{KL}} \left( p_+^{\otimes n_{\mathrm{disc}}} \middle\| p_-^{\otimes n_{\mathrm{disc}}} \right) } \nonumber\\
		&= \sqrt{ \frac{n_{\mathrm{disc}}}{2} D_{\mathrm{KL}}(p_+\|p_-) } \nonumber\\
		&= \sqrt{ \frac{n_{\mathrm{disc}}}{2} \,2\delta \log\frac{1+2\delta}{1-2\delta} } \nonumber\\
		&\le \sqrt{ \frac{n_{\mathrm{disc}}}{2} \,2\delta\cdot6\delta } \nonumber\\
		&= \sqrt{6n_{\mathrm{disc}}\delta^2}.
	\end{align}
	Here, we used
	\begin{align}
		\log(1+2\delta)&\le2\delta, \\
		-\log(1-2\delta)&\le4\delta \qquad \left(\delta\in(0,1/4)\right).
	\end{align}
	To achieve success probability at least $2/3$, we require
	\begin{equation}
		\frac{1}{3} \le d_{\mathrm{TV}} \left( p_+^{\otimes n_{\mathrm{disc}}}, p_-^{\otimes n_{\mathrm{disc}}} \right) \le \sqrt{6n_{\mathrm{disc}}\delta^2}.
	\end{equation}
	Therefore,
	\begin{equation}
		n_{\mathrm{disc}} = \bigOmega{\frac{1}{\delta^2}}.
	\end{equation}
\end{proof}

Next, we show how quantum channels that simulate a block-encoding unitary channel can be used to distinguish $\rho_+$ and $\rho_-$ with sufficient probability.
We note that an $(\alpha,a,0)$-block encoding of $\rho_\pm$ can also be viewed as an $(\alpha,a+1,0)$-block encoding of the scalar $1/2\pm\delta$.
Our strategy is to distinguish the singular values using a polynomial approximation to a superposition of sign functions, namely
\begin{equation}
	U_{\rho_\pm} =
	\begin{pmatrix}
		\frac{\frac{1}{2}\pm\delta}{\alpha} & \ast \\
		\ast & \ast
	\end{pmatrix}
	\quad \xrightarrow{\mathrm{QSVT}} \quad
	\begin{pmatrix}
		f\left(\frac{\frac{1}{2}\pm\delta}{\alpha}\right) & \ast \\
		\ast & \ast
	\end{pmatrix}.
\end{equation}
Here, $f$ is the function defined in \eqref{equation:g}.
We use the following polynomial approximation to the sign function.

\begin{Lem}[\cite{Gilyen2019}, Lemma 25]
	For all $\gamma>0$ and $\varepsilon'\in(0,1/2)$, there exists an efficiently computable odd polynomial $P\in\mathbb{R}[x]$ of degree
	\begin{equation}
		\deg P = \bigO{\frac{\log(1/\varepsilon')}{\gamma}}
	\end{equation}
	such that
	\begin{align}
		\|P\|_{\infty,[-2,2]}&\le1, \\
		\|P-\operatorname{sign}\|_{ \infty,\, [-2,2]\setminus(-\gamma/2,\gamma/2) } &\le\varepsilon'.
	\end{align}
	\label{Lemma:approximate sign function}
\end{Lem}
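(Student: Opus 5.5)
The plan is the standard Gaussian-smoothing construction: approximate $\operatorname{sign}$ by a scaled error function, approximate that analytic function by a low-degree polynomial, and then rescale so the polynomial stays bounded by $1$. First, rescale the domain. Setting $y=x/2$, it suffices to build an odd polynomial $\tilde P$ on $[-1,1]$ that is bounded by $1$ there and is $\varepsilon'$-close to $\operatorname{sign}(y)$ for $|y|\ge\gamma/4$. Then take $P(x)=\tilde P(x/2)$. This rescaling only changes $\gamma$ by a constant factor, so it does not affect the $\bigO{\log(1/\varepsilon')/\gamma}$ degree.

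Second, choose the smooth surrogate $g(y)=\operatorname{erf}(ky)$, which is odd and satisfies $|g|\le 1$. Since $1-\operatorname{erf}(t)\le e^{-t^2}$, choosing $k=\Theta\!\left(\sqrt{\log(1/\varepsilon')}/\gamma\right)$ makes $|g(y)-\operatorname{sign}(y)|\le\varepsilon'/3$ whenever $|y|\ge\gamma/4$.

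Third, approximate $g$ by a polynomial. Write
\[
g(y)=\frac{2k}{\sqrt\pi}\int_0^{y}e^{-k^2t^2}\,dt .
\]
Expand the Gaussian $e^{-k^2t^2}$ in Chebyshev polynomials, whose coefficients are modified Bessel functions $e^{-k^2/2}I_j(k^2/2)$. Using the standard tail bounds on these coefficients, truncating at degree
\[
D=\bigO{\sqrt{(k^2+\log(1/\varepsilon'))\log(1/\varepsilon')}}=\bigO{k\sqrt{\log(1/\varepsilon')}}
\]
gives uniform error at most $\varepsilon'/(3k)$ on $[-1,1]$. The extra factor $k$ inside the logarithm only costs lower-order terms. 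Because the Gaussian is even, the truncation keeps only even terms. Integrating termwise from $0$ therefore gives an odd polynomial $\hat g$ of degree $D+1$, and the prefactor $k$ in front of the integral is absorbed by the $\varepsilon'/(3k)$ accuracy, so $\|\hat g-g\|_{\infty,[-1,1]}\le\varepsilon'/3$. Substituting $k$ gives $\deg\hat g=\bigO{\log(1/\varepsilon')/\gamma}$. Every step is explicit (Bessel coefficients, integrated Chebyshev polynomials), so the polynomial is efficiently computable.

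Finally, enforce the bound $\|P\|_\infty\le1$ by setting $\tilde P=\hat g/(1+\varepsilon'/3)$. This still approximates $\operatorname{sign}$ to within $\varepsilon'/3+\varepsilon'/3+\varepsilon'/3\le\varepsilon'$ outside the gap, and oddness is preserved. The main obstacle is the third step: obtaining a truncation bound for the Chebyshev/Bessel expansion of the Gaussian that is uniform on the interval and scales as $k\sqrt{\log(1/\varepsilon')}$ rather than $k^2$. I would first try the bound $I_j(z)\le (z/2)^j e^{z}/j!$ together with a Chernoff-type estimate on the tail $\sum_{j>D}e^{-z}I_j(z)$, in the style of the Low--Chuang / Sachdeva--Vishnoi results.
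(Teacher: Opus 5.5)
The paper does not prove this lemma; it quotes it from Gily\'en et al., whose argument goes back to Low and Chuang's polynomial approximation of $\operatorname{erf}(kx)$. That is the same outline you give: rescale, smooth with $\operatorname{erf}(ky)$, expand the Gaussian in Chebyshev polynomials with Bessel coefficients, integrate termwise, and divide by $1+\varepsilon'/3$. Steps one, two and four are fine, but the degree count in step three does not hold.

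You ask for a uniform $\varepsilon'/(3k)$ approximation of $e^{-k^2t^2}$ and then pay the prefactor $2k/\sqrt\pi$. That accuracy costs $\log(k/\varepsilon')$ instead of $L:=\log(1/\varepsilon')$. Since $k=\Theta(\sqrt L/\gamma)$, the extra $\log k\approx\log(1/\gamma)$ is \emph{not} lower order. What your argument actually yields is essentially $\deg\hat g=\bigO{\gamma^{-1}\sqrt{L\,(L+\log(1/\gamma))}}$. For fixed $\varepsilon'$ and $\gamma\to0$ this is $\Theta(\sqrt{\log(1/\gamma)}/\gamma)$, not $\bigO{L/\gamma}$.

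Within your scheme this loss is real, not an artifact of a loose estimate. Write $e^{-k^2t^2}=e^{-z}\bigl[I_0(z)+2\sum_{j\ge1}(-1)^jI_j(z)T_{2j}(t)\bigr]$ with $z=k^2/2$. At $t=0$ all tail terms have the same sign, so the truncation error there is exactly $2e^{-z}\sum_{j>J}I_j(z)=\Pr[|K|>J]$, where $K$ is the difference of two independent Poisson$(z/2)$ variables. This decays only like $e^{-J^2/k^2}$, so accuracy $\varepsilon'/k$ genuinely requires $J\gtrsim k\sqrt{\log(k/\varepsilon')}$.

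This is exactly the regime the paper needs ($\varepsilon'=0.01$ fixed, $\gamma=\delta/\alpha\to0$). With your degree bound, $l\varepsilon$ in Lemma~\ref{lemma:discriminate with queries} grows like $\sqrt{\log(1/\varepsilon)}/C$ for $\delta=C\varepsilon$. Rebalancing $\delta$ then costs a logarithmic factor in Theorem~\ref{Theorem:lower bound of samples}.

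The missing idea is to bound the \emph{integrated} tail directly, so that integration pays for the prefactor $k$. For $n\ge2$, $\int_0^yT_n(t)\,dt=\tfrac12\bigl[\tfrac{T_{n+1}(t)}{n+1}-\tfrac{T_{n-1}(t)}{n-1}\bigr]_0^y$. Hence $|\int_0^yT_{2j}|\le 2/(2j-1)\le1/J$ for $j>J$, and
\[
\|\hat g-g\|_{\infty,[-1,1]}\le\frac{2k}{\sqrt\pi}\cdot\frac{2}{J}\,e^{-z}\sum_{j>J}I_j(z)=\frac{4k}{\sqrt\pi\,J}\Pr[K>J].
\]
With $J\gtrsim k\sqrt L$, the factor $k/J$ is $O(1)$. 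Bernstein's inequality for $K$ (log-MGF $z(\cosh\lambda-1)$, variance $z$, jumps $\pm1$) gives $\Pr[K>J]\le\exp\bigl(-J^2/(k^2+2J/3)\bigr)$. So $J=\bigO{k\sqrt L+L}$ suffices and $\deg\hat g=2J+1=\bigO{L/\gamma}$; for $\gamma>4$ the constraint set is empty and $P=0$ works. This is the standard way to get degree $\bigO{\sqrt{(k^2+L)L}}$ for $\operatorname{erf}(kx)$ with no $\log k$.

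Two smaller points. First, the factorial bound $I_j(z)\le(z/2)^je^z/j!$ you propose to start from only gives decay for $j\gtrsim e\,k^2/4$, i.e.\ degree $\bigO{L/\gamma^2}$. It is the Chernoff/Bernstein estimate on the law $e^{-z}I_j(z)=\Pr[K=j]$ that does the work. Second, since $2/\sqrt\pi>1$, uniform accuracy $\varepsilon'/(3k)$ on the Gaussian gives about $1.13\,\varepsilon'/3$ on $g$, so one constant must be shrunk.
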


We use the convention
\begin{equation}
	\operatorname{sign}(x):=
	\begin{cases}
		1, & x>0,\\
		0, & x=0,\\
		-1, & x<0.
	\end{cases}
\end{equation}
Define
\begin{equation}
	f(x):= \frac{1}{4} \left( \operatorname{sign}\left(x+\frac{1}{2\alpha}\right) - \operatorname{sign}\left(x-\frac{1}{2\alpha}\right) \right),
	\label{equation:g}
\end{equation}
and let
\begin{equation}
	\widetilde{f}(x):= \frac{1}{4} \left( P\left(x+\frac{1}{2\alpha}\right) - P\left(x-\frac{1}{2\alpha}\right) \right),
\end{equation}
where we set $\gamma=\delta/\alpha$ in Lemma~\ref{Lemma:approximate sign function}.

\begin{Lem}
	Let
	\begin{equation}
		D_{\delta,\alpha} := [-1,1] \setminus \left( \left[-\frac{1+\delta}{2\alpha},-\frac{1-\delta}{2\alpha}\right] \cup \left[\frac{1-\delta}{2\alpha},\frac{1+\delta}{2\alpha}\right] \right).
	\end{equation}
	Then $\widetilde{f}$ satisfies the following properties:
	\begin{itemize}
		\item $\|\widetilde{f}\|_{\infty,[-1,1]}\le\frac{1}{2}$.\
        \item $\|\widetilde{f}-f\|_{\infty,D_{\delta,\alpha}}\le\frac{\varepsilon'}{2}$.
		\item $\widetilde{f}$ is an even polynomial and
		\begin{equation}
			\deg\widetilde{f} \le \deg P = \bigO{ \frac{\alpha}{\delta} \log\left(\frac{1}{\varepsilon'}\right) }.
		\end{equation}
	\end{itemize}
	\label{Lemma:approximate g}
\end{Lem}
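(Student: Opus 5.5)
The plan is to derive all three properties directly from Lemma~\ref{Lemma:approximate sign function}, taking $\gamma=\delta/\alpha$ and writing $c:=1/(2\alpha)$. Each property of $\widetilde f$ reduces to a property of $P$ evaluated at the shifted arguments $x\pm c$.

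\emph{Boundedness.} Fix $x\in[-1,1]$. Since $\alpha\ge 1$, we have $c\le 1/2$, so $x\pm c\in[-3/2,3/2]\subset[-2,2]$. On this interval Lemma~\ref{Lemma:approximate sign function} gives $|P|\le 1$. The triangle inequality then yields $|\widetilde f(x)|\le \tfrac14(1+1)=\tfrac12$.

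\emph{Approximation on $D_{\delta,\alpha}$.} We need $x\pm c\notin(-\gamma/2,\gamma/2)=(-\tfrac{\delta}{2\alpha},\tfrac{\delta}{2\alpha})$ for every $x\in D_{\delta,\alpha}$.
\begin{itemize}
\item The condition $|x+c|<\tfrac{\delta}{2\alpha}$ is equivalent to $x\in\bigl(-\tfrac{1+\delta}{2\alpha},-\tfrac{1-\delta}{2\alpha}\bigr)$, which is contained in the first excluded interval.
\item Similarly, $|x-c|<\tfrac{\delta}{2\alpha}$ is equivalent to $x$ lying in the open version of the second excluded interval.
\end{itemize}
Hence for $x\in D_{\delta,\alpha}$ both shifted arguments lie in $[-2,2]\setminus(-\gamma/2,\gamma/2)$. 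There $|P-\operatorname{sign}|\le\varepsilon'$, so
\begin{equation}
|\widetilde f(x)-f(x)|\le \tfrac14\bigl(|P(x+c)-\operatorname{sign}(x+c)|+|P(x-c)-\operatorname{sign}(x-c)|\bigr)\le \tfrac{\varepsilon'}{2}.
\end{equation}
The convention $\operatorname{sign}(0)=0$ never enters, because $x\pm c\neq 0$ on $D_{\delta,\alpha}$.

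\emph{Parity and degree.} Oddness of $P$ gives
\begin{equation}
\widetilde f(-x)=\tfrac14\bigl(P(-(x-c))-P(-(x+c))\bigr)=\tfrac14\bigl(-P(x-c)+P(x+c)\bigr)=\widetilde f(x),
\end{equation}
so $\widetilde f$ is even. Translating the argument of a polynomial does not raise its degree, and neither does taking a difference of two such polynomials. Therefore $\deg\widetilde f\le\deg P=O(\log(1/\varepsilon')/\gamma)=O\bigl(\tfrac{\alpha}{\delta}\log\tfrac1{\varepsilon'}\bigr)$.

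None of these steps is a real obstacle. The only points needing care are the interval bookkeeping in the second step and using $\alpha\ge1$ to keep the shifted arguments inside $[-2,2]$.
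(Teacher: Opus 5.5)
Your proposal is correct and follows essentially the same route as the paper: bound each shifted copy of $P$ on $[-2,2]$ for the sup-norm estimate, apply the triangle inequality to $P(x\pm\frac{1}{2\alpha})-\operatorname{sign}(x\pm\frac{1}{2\alpha})$ for the approximation on $D_{\delta,\alpha}$, and use the oddness of $P$ for evenness and the degree bound. You also check explicitly that $x\pm\frac{1}{2\alpha}$ avoids $(-\gamma/2,\gamma/2)$ on $D_{\delta,\alpha}$, and that $\alpha\ge1$ keeps the shifted arguments in $[-2,2]$; the paper leaves both of these points implicit.
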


\begin{proof}
	\begin{itemize}
		\item We have
		\begin{align}
			\|\widetilde{f}\|_{\infty,[-1,1]} &= \frac{1}{4} \left\| P\left(x+\frac{1}{2\alpha}\right) - P\left(x-\frac{1}{2\alpha}\right) \right\|_{\infty,[-1,1]} \nonumber\\
			&\le \frac{1}{4} \left( \|P\|_{\infty,[-1+1/(2\alpha),\,1+1/(2\alpha)]} + \|P\|_{\infty,[-1-1/(2\alpha),\,1-1/(2\alpha)]} \right) \nonumber\\
			&\le\frac{1}{2}.
		\end{align}

		\item Define
		\begin{equation}
			\Delta_\pm(x) := P\left(x\pm\frac{1}{2\alpha}\right) - \operatorname{sign}\left(x\pm\frac{1}{2\alpha}\right).
		\end{equation}
		Then
		\begin{align}
			\|\widetilde{f}-f\|_{\infty,D_{\delta,\alpha}} &= \frac{1}{4} \|\Delta_+-\Delta_-\|_{\infty,D_{\delta,\alpha}} \nonumber\\
			&\le \frac{1}{4} \left( \|\Delta_+\|_{\infty,D_{\delta,\alpha}} + \|\Delta_-\|_{\infty,D_{\delta,\alpha}} \right) \nonumber\\
			&\le\frac{\varepsilon'}{2}.
		\end{align}

		\item Since $P$ is odd,
		\begin{align}
			\widetilde{f}(-x) &= \frac{1}{4} \left( P\left(-x+\frac{1}{2\alpha}\right) - P\left(-x-\frac{1}{2\alpha}\right) \right) \nonumber\\
			&= \frac{1}{4} \left( -P\left(x-\frac{1}{2\alpha}\right) + P\left(x+\frac{1}{2\alpha}\right) \right) \nonumber\\
			&= \widetilde{f}(x).
		\end{align}
		Thus, $\widetilde{f}$ is even.
		Since shifts and linear combinations do not increase the degree,
		\begin{equation}
			\deg\widetilde{f} \le \deg P = \bigO{ \frac{\alpha}{\delta} \log\left(\frac{1}{\varepsilon'}\right) }.
		\end{equation}
	\end{itemize}
\end{proof}

\begin{Lem}[\cite{Gilyen2019}, Theorem 26]
Let $U_\rho$ be an $(\alpha,a+1,0)$-block encoding of a scalar
$\rho_{11}\in[0,1]$, and let $\mathcal{U}_\rho$ and
$\mathcal{U}_\rho^\dagger$ be the unitary channels induced by
$U_\rho$ and $U_\rho^\dagger$, respectively.
Assume that $\alpha=\Theta(1)$.
Then the unitary channel induced by a $(1,a+2,0)$-block-encoding
unitary of $\widetilde{f}(\rho_{11}/\alpha)$ can be implemented using
\begin{equation}
    l=\bigO{\frac{1}{\delta}
    \log\left(\frac{1}{\varepsilon'}\right)}
\end{equation}
queries in total to $\mathcal{U}_\rho$ and
$\mathcal{U}_\rho^\dagger$, together with known phase gates and
controlled reflections about $\ket{0}^{\otimes(a+1)}$.
	\label{lemma:QSVT}
\end{Lem}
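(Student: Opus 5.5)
The plan is to read this off the QSVT machinery of \cite{Gilyen2019}, with $\widetilde{f}$ from Lemma~\ref{Lemma:approximate g} as the target polynomial. The encoded object is a $1\times1$ block, $(\bra{0}^{\otimes(a+1)})U_\rho(\ket{0}^{\otimes(a+1)})=\rho_{11}/\alpha$. Since $\rho_{11}\in[0,1]$, this is a nonnegative real number at most $1/\alpha\le 1$, so it is its own unique singular value. For an even polynomial, the singular value transformation of this block is simply the scalar $\widetilde{f}(\rho_{11}/\alpha)$. No issue arises with left and right singular vectors, because the input and output projectors coincide: both are $\Pi=\ket{0}\!\bra{0}^{\otimes(a+1)}$.

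First, I would check the hypotheses of the real-polynomial QSVT theorem (Theorem~17 / Corollary~18 of \cite{Gilyen2019}). It needs a real polynomial $P_{\Re}$ of definite parity with $|P_{\Re}(x)|\le1$ on $[-1,1]$. By Lemma~\ref{Lemma:approximate g}, $\widetilde{f}$ is even, real, and satisfies $\|\widetilde{f}\|_{\infty,[-1,1]}\le 1/2\le1$.

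Second, I would invoke the theorem. It supplies a phase sequence $\Phi\in\mathbb{R}^{n}$, with $n=\deg\widetilde{f}$, such that the alternating circuit $U_\Phi$ has $\Pi U_\Phi\Pi=P^{(SV)}(\Pi U_\rho\Pi)$ for some complex polynomial $P$ with $\Re P=\widetilde{f}$. The circuit applies $U_\rho$ and $U_\rho^\dagger$ alternately, interleaved with projector-controlled phase rotations $e^{i\phi_j(2\Pi-I)}$, and each such rotation is a controlled reflection about $\ket{0}^{\otimes(a+1)}$ combined with a known phase gate. The real part is then extracted by the standard one-qubit linear combination $\tfrac12(U_\Phi+U_{-\Phi})$: a Hadamard on one extra qubit, followed by the circuit with the phase signs controlled on that qubit, followed by a Hadamard. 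This extra qubit accounts for the ancilla count rising from $a+1$ to $a+2$ and the normalization being $1$. The key point is that $U_\Phi$ and $U_{-\Phi}$ use the same sequence of $U_\rho,U_\rho^\dagger$ applications and differ only in the phases. So only the phase gates need to be controlled, never $U_\rho$ itself, and the construction is implementable with query access to the channels $\mathcal{U}_\rho,\mathcal{U}_\rho^\dagger$ alone. Because the whole circuit is built from these unitaries, a global phase on $U_\rho$ appears only as an overall phase of the output unitary, which the induced channel does not see.

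Third, the query count equals the polynomial degree: $l=\deg\widetilde{f}=\bigO{(\alpha/\delta)\log(1/\varepsilon')}=\bigO{(1/\delta)\log(1/\varepsilon')}$, using Lemma~\ref{Lemma:approximate g} and $\alpha=\Theta(1)$.

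The step I expect to need the most care is this second one: confirming that the real-part extraction requires no controlled versions of $U_\rho$, since only the plain channels are available. If this is unclear from the cited theorem, I would write out the controlled-phase circuit explicitly, which amounts to replacing each $e^{i\phi_j(2\Pi-I)}$ by $e^{i\phi_j Z\otimes(2\Pi-I)}$. Everything else is a direct check of hypotheses against Lemma~\ref{Lemma:approximate g}.
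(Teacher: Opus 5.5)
Your proposal is correct and follows the paper's proof essentially step for step. You apply Corollary~18 of \cite{Gilyen2019} to the even polynomial $\widetilde{f}$, which is bounded by $1/2$ on $[-1,1]$ by Lemma~\ref{Lemma:approximate g}. You extract the real part via $(H\otimes I)\bigl(\ket{0}\!\bra{0}\otimes U_\Phi+\ket{1}\!\bra{1}\otimes U_{-\Phi}\bigr)(H\otimes I)$, where only the known phase rotations are controlled, and you count $l=\deg\widetilde{f}=\bigO{(\alpha/\delta)\log(1/\varepsilon')}$ queries. Your extra remark on global phases is also correct and is a point the paper leaves implicit: an unknown global phase on $U_\rho$ or $U_\rho^\dagger$ only multiplies the whole output unitary by an overall phase, because these unitaries are applied unconditionally in both branches.
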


\begin{proof}
	Let
	\begin{equation}
		\Pi := \ket{0}\!\bra{0}^{\otimes(a+1)}.
	\end{equation}
	Then
	\begin{equation}
		\frac{\rho_{11}}{\alpha} \ket{0}\!\bra{0}^{\otimes(a+1)} = \Pi U_\rho\Pi.
	\end{equation}
	By Corollary 18 of \cite{Gilyen2019} and Lemma~\ref{Lemma:approximate g}, there exists a real vector
	\begin{equation}
		\Phi = (\phi_1,\phi_2,\ldots,\phi_l) \in \mathbb{R}^l
	\end{equation}
	such that
	\begin{align}
		\widetilde{f}(\Pi U_\rho\Pi) &= (\bra{+}\otimes\Pi) \left( \ket{0}\!\bra{0}\otimes U_\Phi + \ket{1}\!\bra{1}\otimes U_{-\Phi} \right) (\ket{+}\otimes\Pi), \\
		U_\Phi &:= \prod_{j=1}^{l/2} \left( e^{i\phi_{2j-1}(2\Pi-I)} U_\rho^\dagger e^{i\phi_{2j}(2\Pi-I)} U_\rho \right).
		\label{equation:QSVT circuit}
	\end{align}
	Therefore,
	\begin{equation}
		(H\otimes I) \left( \ket{0}\!\bra{0}\otimes U_\Phi + \ket{1}\!\bra{1}\otimes U_{-\Phi} \right) (H\otimes I)
	\end{equation}
	is a block encoding of $\widetilde{f}(\rho_{11}/\alpha)$.
	The unknown unitaries $U_\rho$ and $U_\rho^\dagger$ are applied unconditionally in the same order and the same number of times in the two branches of the multiplexed unitary; only the known phase gates $e^{\pm i\phi_k(2\Pi-I)}$ are coherently controlled by the branch qubit. Hence, the construction does not require controlled access to either $U_\rho$ or $U_\rho^\dagger$. Consequently, the unitary channel induced by the above block-encoding unitary can be implemented using $l$ queries in total to $\mathcal{U}_\rho$ and $\mathcal{U}_\rho^\dagger$.
\end{proof}

\begin{Lem}[\cite{gilyén2022}, Corollary 12]
	For every block encoding $\widetilde{U}\in\mathbb{C}^{d\times d}$ of $\widetilde{A}\in\mathbb{C}^{\widetilde{n}\times n}$ with $d\ge4(\widetilde{n}+n)$, and for every $A\in\mathbb{C}^{\widetilde{n}\times n}$ satisfying
	\begin{equation}
		\|A-\widetilde{A}\|_\infty + \left\| \frac{A+\widetilde{A}}{2} \right\|_\infty^2 \le1,
	\end{equation}
	there exists a block encoding $U\in\mathbb{C}^{d\times d}$ of $A$ satisfying
	\begin{equation}
		\|U-\widetilde{U}\|_\infty \le \sqrt{ \frac{2}{ 1- \left\| \frac{A+\widetilde{A}}{2} \right\|_\infty^2 } } \, \|A-\widetilde{A}\|_\infty.
	\end{equation}
	\label{lemma:robustness of block-encoding}
\end{Lem}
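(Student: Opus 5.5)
This lemma is cited from \cite{gilyén2022} (Corollary 12), so the natural route is to reconstruct its argument rather than reprove it from scratch. Write $\Pi$ for the projection onto the $\widetilde n$-dimensional output block and $\widetilde\Pi$ for the $n$-dimensional input block, so that $\Pi\widetilde U\widetilde\Pi=\widetilde A$. The aim is to build a unitary $U$ with $\Pi U\widetilde\Pi=A$ as a small multiplicative perturbation $U=W\widetilde U$ or $U=\widetilde U V$, where $W$ or $V$ is a unitary close to $I$.

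The key intermediate object is a unitary dilation of a contraction. First I would set $M:=(A+\widetilde A)/2$ and $E:=A-\widetilde A$. The hypothesis $\|E\|_\infty+\|M\|_\infty^2\le 1$ ensures that $\|A\|_\infty\le 1$ and $\|\widetilde A\|_\infty\le1$, so both $A$ and $\widetilde A$ admit unitary dilations. The dimension condition $d\ge 4(\widetilde n+n)$ is what leaves enough room in the ambient space to place two copies of the standard Halmos dilation, one for $\widetilde A$ and one for $A$, in orthogonal ancillary subspaces. Next I would choose both dilations in a matched form,
\begin{equation}
\begin{pmatrix} X & \sqrt{I-XX^\dagger}\\ \sqrt{I-X^\dagger X} & -X^\dagger\end{pmatrix},
\end{equation}
with $X\in\{A,\widetilde A\}$. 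Using a trick, replacing $X\mapsto M\pm E/2$ and symmetrizing, I would bound the difference of these two dilations. Then, since $\widetilde U$ and the dilation of $\widetilde A$ share the same top-left block, they differ by unitaries acting only on the complements of $\Pi$ and $\widetilde\Pi$. I would transport that same correction onto the dilation of $A$ to define $U$, which gives $\|U-\widetilde U\|_\infty$ equal to the distance between the two dilations.

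The main obstacle is the square-root perturbation bound,
\begin{equation}
\bigl\|\sqrt{I-A^\dagger A}-\sqrt{I-\widetilde A^\dagger\widetilde A}\bigr\|_\infty,
\end{equation}
which must be controlled by $\|E\|_\infty/\sqrt{1-\|M\|_\infty^2}$. The reason is that the operator square root is not Lipschitz near $0$, so the bound has to use the margin $1-\|M\|^2$. I would first try writing $A^\dagger A-\widetilde A^\dagger\widetilde A=M^\dagger E+E^\dagger M$ and applying the integral representation $\sqrt{Y}=\pi^{-1}\int_0^\infty t^{-1/2}Y(Y+t)^{-1}\,dt$, which gives a resolvent perturbation bound in which the smallest eigenvalue of $I-M^\dagger M$ appears. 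The $\sqrt2$ prefactor should come from combining the two off-diagonal blocks in quadrature. If this route turns out to be too lossy, the fallback is to cite the stated corollary of \cite{gilyén2022} directly, since the paper itself uses it only as an imported tool.
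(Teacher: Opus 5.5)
The paper gives no proof of this lemma; it is imported verbatim from Gily\'en et al., so your fallback of citing it is exactly what the paper does. Taken as a self-contained proof, however, your reconstruction has a genuine gap at its only nontrivial step.

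The structural reduction is fine. Two $d\times d$ unitaries with the same $\Pi(\cdot)\widetilde\Pi$ block differ by unitaries acting on the complements. Transporting that correction therefore reduces everything to bounding $\|H(A)-H(\widetilde A)\|_\infty$ for Halmos dilations embedded in the \emph{same} way. That reduction needs only $d\ge\widetilde n+n$. Putting the two dilations in orthogonal ancillary subspaces, as you describe, would generally make their difference of order one. So your explanation of the hypothesis $d\ge4(\widetilde n+n)$ does not match anything your argument uses.

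The quantitative step fails as planned, and your worry that this route might be ``too lossy'' understates the problem.

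\textbf{The square-root bound is false.} The claimed estimate $\|\sqrt{I-A^\dagger A}-\sqrt{I-\widetilde A^\dagger\widetilde A}\|_\infty\le\|E\|_\infty/\sqrt{1-\|M\|_\infty^2}$ already fails for scalars satisfying the hypothesis. Take $A=0.99$, $\widetilde A=0.999$. Then $\|E\|+\|M\|^2\approx0.998$, yet the left side is $\approx0.0964$ and the right side is $\approx0.0859$.

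\textbf{The resolvent integral does not produce $I-M^\dagger M$.} It gives $\|X-Y\|_\infty/(\sqrt{\lambda_{\min}(X)}+\sqrt{\lambda_{\min}(Y)})$ with $X=I-A^\dagger A$ and $Y=I-\widetilde A^\dagger\widetilde A$. Both minimal eigenvalues can be tiny in different directions. Take $A=\mathrm{diag}(1-\eta^2/2,\,1-2\eta+\eta^2/2)$ and $\widetilde A$ the same with the entries swapped. The hypothesis holds with equality and $\lambda_{\min}(X),\lambda_{\min}(Y)\approx\eta^2$. Your bound is then $\approx2$, while the true norm of the square-root difference is $\approx2\sqrt\eta$.

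\textbf{The combination step is wrong.} We have $H(A)-H(\widetilde A)=\begin{pmatrix}E&S_1\\ S_2&-E^\dagger\end{pmatrix}$. The two off-diagonal blocks combine as $\max(\|S_1\|,\|S_2\|)$, not in quadrature, and your sketch drops the diagonal $E$ blocks. In general only $\|E\|+\max_i\|S_i\|$ follows.

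\textbf{There is no slack to absorb these losses.} For real scalars, set $S:=\sqrt{1-A^2}-\sqrt{1-\widetilde A^2}$. Comparing first columns gives $\|U-\widetilde U\|^2\ge E^2+S^2$ for every pair of block encodings. At $M=1-\eta$, $E=1-M^2$, $\eta\to0$, this lower bound matches the lemma's right-hand side asymptotically. There the $\sqrt2$ comes from the off-diagonal block alone, since $|S|\approx\sqrt2\,|E|/\sqrt{1-M^2}$. Hence even the sharp scalar bound on $S$, combined in quadrature with $E$, would overshoot to $\|E\|\sqrt{1+2/(1-\|M\|^2)}$.

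\textbf{What is needed instead.} You need a joint estimate of the whole dilation difference. For real scalars $A=\cos\theta$, $\widetilde A=\cos\phi$, the two dilations are exactly $|E|/\sin\frac{\theta+\phi}{2}$ apart, and $\sin^2\frac{\theta+\phi}{2}\ge\frac{1-M^2}{2}$ gives the $\sqrt2$. You then need a matrix analogue of this that handles non-commuting $A$ and $\widetilde A$. Your sketch does not supply one, and it uses the hypothesis $\|E\|+\|M\|^2\le1$ only to obtain $\|A\|_\infty\le1$.
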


Lemma~\ref{lemma:robustness of block-encoding} bounds the operator-norm distance between suitable block encodings in terms of the distance between the encoded matrices.
Choose $\varepsilon'=0.01$.
On the domain $D_{\delta,\alpha}$, the intended estimate is
\begin{align}
	\|f-\widetilde{f}\|_{\infty,D_{\delta,\alpha}} + \left\| \frac{f+\widetilde{f}}{2} \right\|_{\infty,D_{\delta,\alpha}}^2 &\le 2\varepsilon' + \frac{1}{4}(1+\varepsilon')^2 \\
	&<1.
	\label{equation:robustness-condition-application}
\end{align}
Thus, the condition of Lemma \ref{lemma:robustness of block-encoding} is satisfied. Therefore, there exists a block encoding $U_f$ of $f$ such that
\begin{align}
	\|U_f-U_{\widetilde{f}}\|_\infty &\le \frac{\varepsilon'}{2}\sqrt{\frac{2}{1- \frac{1}{4}}} \\
    &=\sqrt{\frac{2}{3}}\varepsilon'\\
	&< \frac{1}{120}.
\end{align}
Consequently, using the relationship between the diamond norm and the operator norm \cite{Kitaev1997}, we obtain
\begin{equation}
	\|\mathcal{U}_f-\mathcal{U}_{\widetilde{f}}\|_\diamond \le 2\|U_f-U_{\widetilde{f}}\|_\infty < \frac{1}{60}.
\end{equation}
If the approximate channels $\mathcal{E}_\rho$ and $\mathcal{E}_\rho^\dagger$ are used $l$ times in total, a hybrid argument gives
\begin{equation}
	\|\mathcal{E}_{\widetilde{f}}-\mathcal{U}_{\widetilde{f}}\|_\diamond \le l\varepsilon.
\end{equation}
Overall,
\begin{equation}
	\|\mathcal{E}_{\widetilde{f}}-\mathcal{U}_f\|_\diamond \le l\varepsilon + \frac{1}{60}=:\tau.
\end{equation}

Now, let us focus on the discrimination task using the above QSVT construction.

\begin{Lem}
	Suppose that $\rho$ is promised to be either $\rho_+$ or $\rho_-$ in \eqref{equation:discriminated quantum state}, and that we are given channels $\mathcal{E}_\rho$ and $\mathcal{E}_\rho^\dagger$ that approximate $\mathcal{U}_\rho$ and $\mathcal{U}_\rho^\dagger$ to diamond-norm precision $\varepsilon$.
	Choose the parameter $\delta$ defining $\rho_+$ and $\rho_-$ so that $\delta=\Theta(\varepsilon)$.
	Then there exists an algorithm that identifies whether the given channels correspond to $\rho_+$ or $\rho_-$ with constant success probability, using
	\begin{equation}
		\bigO{\frac{1}{\varepsilon}}
	\end{equation}
	queries to $\mathcal{E}_\rho$ and $\mathcal{E}_\rho^\dagger$.
	\label{lemma:discriminate with queries}
\end{Lem}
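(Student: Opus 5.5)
The plan is to run the QSVT circuit of Lemma~\ref{lemma:QSVT} with the approximate channels in place of the ideal unitaries. We then measure whether the block-encoded output lands in the ``good'' subspace and read off $\rho_+$ versus $\rho_-$ from the success statistics.

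\textbf{Parameters.} First fix $\delta=c\,\varepsilon$ with a small constant $c$, and fix $\varepsilon'=0.01$. By Lemma~\ref{lemma:QSVT}, the circuit implementing a $(1,a+2,0)$-block encoding of $\widetilde f(\rho_{11}/\alpha)$ uses $l=O((1/\delta)\log(1/\varepsilon'))=O(1/\varepsilon)$ queries in total to $\mathcal{U}_\rho$ and $\mathcal{U}_\rho^\dagger$. We treat $\rho_{\pm}$ as a scalar block encoding of $1/2\pm\delta$ by including the projection onto $\ket{0}$ of the system qubit in $\Pi$.

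\textbf{Ideal decision rule.} Evaluate the ideal target $f$ at the two encoded points $x_\pm=(1/2\pm\delta)/\alpha$:
\begin{itemize}
\item $x_+$ lies above $1/(2\alpha)$, so $f(x_+)=\tfrac14(1-1)=0$;
\item $x_-$ lies strictly between $-1/(2\alpha)$ and $1/(2\alpha)$, so $f(x_-)=\tfrac14(1+1)=\tfrac12$.
\end{itemize}
Both points must lie in $D_{\delta,\alpha}$, which means the excluded windows around $\pm1/(2\alpha)$ should have half-width $\delta/(2\alpha)$ while $|x_\pm-1/(2\alpha)|=\delta/\alpha$. This needs a bookkeeping check. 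I would first adjust $\gamma$ in Lemma~\ref{Lemma:approximate sign function}, for example $\gamma=\delta/\alpha$ so that the window half-width is $\gamma/2$, to confirm $x_\pm\in D_{\delta,\alpha}$. Then Lemma~\ref{Lemma:approximate g} gives $|\widetilde f(x_+)|\le\varepsilon'/2$ and $|\widetilde f(x_-)-1/2|\le\varepsilon'/2$.

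The procedure is to prepare $\ket{0}^{\otimes(a+2)}\ket{0}$ and apply the circuit. Then measure the projector onto $\ket{0}^{\otimes(a+2)}\ket{0}$ and output ``$-$'' on success, ``$+$'' otherwise. With exact unitaries, the success probability is $|\widetilde f(x)|^2$, which is at most $\varepsilon'^2/4$ for $\rho_+$ and at least $(1/2-\varepsilon'/2)^2\approx0.245$ for $\rho_-$. Repeating a constant number of times and thresholding then gives constant success probability.

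\textbf{Robustness.} We need the same rule to work with $\mathcal{E}_\rho$ and $\mathcal{E}_\rho^\dagger$ in place of $\mathcal{U}_\rho$ and $\mathcal{U}_\rho^\dagger$. Since the query slots are filled unconditionally (no control is needed, per Lemma~\ref{lemma:QSVT}), a hybrid argument gives diamond-norm error at most $l\varepsilon$ for the whole circuit. This changes every measurement probability by at most $l\varepsilon/2$. (The comparison to $\mathcal{U}_f$ via Lemma~\ref{lemma:robustness of block-encoding}, giving $\tau=l\varepsilon+1/60$, is not needed here, since working directly with $\widetilde f$ suffices.)

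\textbf{Main obstacle.} The difficulty is the tension between two requirements: $l\varepsilon$ must be a small constant, while $l=\Theta(1/\delta)$ must be large enough to resolve $\delta$. The constants in Lemma~\ref{Lemma:approximate sign function} fix $l\le C/\delta$ for some $C$. We therefore choose $\delta=\Theta(\varepsilon)$ with constant large enough that $l\varepsilon\le C\varepsilon/\delta\le0.1$, say $\delta=10C\varepsilon$, which still satisfies $\delta<1/4$ for $\varepsilon$ small. The probability gap $0.245-\varepsilon'^2/4-0.1$ then stays a positive constant.

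The total query count is a constant number of repetitions times $l=O(1/\varepsilon)$, which is $O(1/\varepsilon)$, as required by the lemma.
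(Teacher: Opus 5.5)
Your proposal is correct and follows essentially the same route as the paper: QSVT with $\widetilde f$ of degree $l=O(1/\delta)$, a hybrid bound of $l\varepsilon$ on the diamond-norm error, the choice $\delta=C\varepsilon$ with $C$ large enough that $l\varepsilon$ is a small constant, and a constant number of repetitions. The one real difference is that you compare the noisy circuit directly with the ideal QSVT channel $\mathcal{U}_{\widetilde f}$, whose success probability is exactly $|\widetilde f(x_\pm)|^2$; this is legitimate and removes the need for Lemma~\ref{lemma:robustness of block-encoding} and the extra $1/60$ term. Note only that your opening choice ``$\delta=c\,\varepsilon$ with a small constant $c$'' should read \emph{large} constant, as your final paragraph correctly has it ($\delta=10C\varepsilon$, valid once $\varepsilon<1/(40C)$).
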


\begin{proof}
	Suppose that we have access to $\mathcal{E}_\rho$ and $\mathcal{E}_\rho^\dagger$.
	Using the QSVT circuit from Lemma~\ref{lemma:QSVT}, we obtain the channel $\mathcal{E}_{\widetilde{f}}$.
	The two channels $\mathcal{E}_\rho$ and $\mathcal{E}_\rho^\dagger$ are used the same number of times.
	We apply $\mathcal{E}_{\widetilde{f}}$ to $\ket{0}\!\bra{0}^{\otimes(a+2)}$ and measure the projector
	\begin{equation}
		\Pi = \ket{0}\!\bra{0}^{\otimes(a+2)}.
	\end{equation}
	For every state $\sigma$,
	\begin{equation}
		\left| \Tr\left[ \Pi\mathcal{E}_{\widetilde{f}}(\sigma) \right] - \Tr\left[ \Pi\mathcal{U}_f(\sigma) \right] \right| \le \|\mathcal{E}_{\widetilde{f}}-\mathcal{U}_f\|_\diamond \le l\varepsilon+\frac{1}{60}.
	\end{equation}

    If $\rho=\rho_+$, then
    \begin{equation}
        f\left(\frac{1/2+\delta}{\alpha}\right)=0.
    \end{equation}
    Choose $\delta=C\varepsilon$, where $C>0$ is a sufficiently large constant such that $l\varepsilon\le 1/120$.
    Such a constant exists because $\alpha=\Theta(1)$ and $\varepsilon'=0.01$ is fixed.
    Therefore, in each repetition,
    \begin{equation}
        p_+:=\Pr[\widehat{\Pi}]\le\tau\le\frac{1}{40}.
    \end{equation}

	On the other hand, if $\rho=\rho_-$, then
    \begin{equation}
        f\left(\frac{1/2-\delta}{\alpha}\right)=\frac{1}{2}.
    \end{equation}
    Since the corresponding ideal postselection probability is $1/4$,
    \begin{equation}
        p_-:=\Pr[\widehat{\Pi}]\ge\frac{1}{4}-\tau\ge\frac{9}{40}.
    \end{equation}
    Repeat the above procedure independently five times.
    Output $\rho_-$ if the outcome associated with $\widehat{\Pi}$ occurs at least once; otherwise output $\rho_+$.
    If $\rho=\rho_+$, the success probability is at least
    \begin{equation}
        (1-p_+)^5\ge\left(\frac{39}{40}\right)^5>\frac{2}{3}.
    \end{equation}
    If $\rho=\rho_-$, the success probability is at least
    \begin{equation}
        1-(1-p_-)^5\ge1-\left(\frac{31}{40}\right)^5>\frac{2}{3}.
\end{equation}
	Thus, we can discriminate the two cases with constant success probability.
	The query complexity is
	\begin{equation}
		\bigO{\frac{1}{\delta}} = \bigO{\frac{1}{\varepsilon}}.
	\end{equation}
\end{proof}

\begin{proof}[Proof of Theorem~\ref{Theorem:lower bound of samples}]
	Suppose that there exists a protocol which, given $N$ copies of $\rho$, implements channels $\mathcal{E}_\rho$ and $\mathcal{E}_\rho^\dagger$ that are $\varepsilon$-close in diamond norm to $\mathcal{U}_\rho$ and $\mathcal{U}_\rho^\dagger$, respectively.
	Here, $\rho$ is promised to be either $\rho_+$ or $\rho_-$.
	By Lemma~\ref{lemma:discriminate with queries}, one can discriminate the two cases using $\bigO{1/\varepsilon}$ queries to $\mathcal{E}_\rho$ and $\mathcal{E}_\rho^\dagger$.
	Therefore, the resulting discrimination procedure uses
	\begin{equation}
		\bigO{\frac{N}{\varepsilon}}
	\end{equation}
	copies of $\rho$.
	By Lemma~\ref{Lemma:lower bound of sample to discriminate the states}, however, every such discrimination procedure requires $\bigOmega{1/\delta^2}=\bigOmega{1/\varepsilon^2}$ copies since $\delta=\Theta(\varepsilon)$.
	Consequently,
	\begin{equation}
		N = \bigOmega{\frac{1}{\varepsilon}}.
	\end{equation}
\end{proof}

\section{Proof of Theorem~\ref{theorem:lower-bound-of-queries}}
\label{appendix:query-to-sample}

Fix positive numbers $\lambda_1,\cdots,\lambda_r$ satisfying
\begin{equation}
    \sum_{j=1}^r\lambda_j=1,
\end{equation}
and denote
\begin{equation}
    \lambda_{\mathrm{max}}:=\max_{j\in[r]}\lambda_j
\end{equation}
We assume that $\lambda_\mathrm{max}\le\alpha$.
For every subset $S\subset[d]$ of cardinality $r$, write
\begin{equation}
    S=\{s_1<\cdots<s_r\},
\end{equation}
and define $f:[d]\to[0,\pi/2]$ by
\begin{equation}
    f(x):=
    \begin{cases}
        \arcsin\left(\dfrac{\lambda_j}{\alpha}\right),
        & x=s_j,\quad j\in[r],\\
        0,
        & x\notin S.
    \end{cases}
\end{equation}

We define the phase oracle $O_f$ by
\begin{equation}
	O_f\ket{x} = e^{if(x)}\ket{x}.
\end{equation}
Here, $\{\ket{x}\}_{x\in[d]}$ denotes the computational basis.
The dependence of $f$ and $O_f$ on $S$ is implicit.
By construction, the quantity
$\lambda_{\mathrm{max}}=\max_{j\in[r]}\lambda_j$
is independent of $S$.

Since $0\le\lambda_j\le\alpha$ for every $j\in[r]$, the quantity $\arcsin(\lambda_j/\alpha)$ is well-defined.
Following \cite{Grover1996, Bennett1997, Zalka1999, Boyer1999}, we prove a lower bound for the corresponding quantum search problem.

\begin{Lem}
	Let $\eta\in(0,1/2)$.
    Let $\eta\in(0,1/2)$.
    Suppose that an algorithm, given query access to the controlled phase oracle $\mathrm{c}\text{-}O_f$ and its inverse $\mathrm{c}\text{-}O_f^\dagger$ corresponding to an unknown subset $S\subset[d]$ of cardinality $r$, is required to output $\hat{z}\in S$ with success probability at least $1-\eta$ for every such subset $S$. Then the algorithm requires
	\begin{equation}
		\Omega\!\left( \frac{\alpha}{\lambda_\mathrm{max}} \sqrt{\frac{d}{r}} \left(\frac{1}{2}-\eta\right) \right)
	\end{equation}
	queries to $\mathrm{c}\text{-}O_f$ and $\mathrm{c}\text{-}O_f^\dagger$.
	\label{Lem:BVVV}
\end{Lem}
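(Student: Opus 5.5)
We prove Lemma~\ref{Lem:BVVV} with the hybrid argument of Bennett--Bernstein--Brassard--Vazirani, adapted to phase oracles with small marked phases and to $r$ marked items. The idea is to compare every oracle $O_f$, indexed by $S$, with the identity oracle $O_0=I$, which corresponds to $f\equiv 0$ and has no marked items.

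Fix an algorithm making $T$ queries. Write $\ket{\psi_t}$ for its state after $t$ queries when run with $O_0$, and $\ket{\psi_t^S}$ for the state when run with $O_f$. Each query to $\mathrm{c}\text{-}O_f$ or $\mathrm{c}\text{-}O_f^\dagger$ differs from the corresponding query to $\mathrm{c}\text{-}O_0=I$ only on the control-$1$ branch and only on basis states $x\in S$. There the factor $e^{\pm i f(x)}-1$ satisfies
\begin{equation}
|e^{\pm i f(x)}-1| \le f(x)=\arcsin(\lambda_j/\alpha)\le \frac{\pi}{2}\frac{\lambda_{\max}}{\alpha}.
\end{equation}
By the standard telescoping bound and unitarity of the interleaving operations,
\begin{equation}
\big\|\ket{\psi_T^S}-\ket{\psi_T}\big\|_2 \le \frac{\pi\lambda_{\max}}{2\alpha}\sum_{t=0}^{T-1}\big\|(\ket{1}\!\bra{1}\otimes P_S\otimes I)\ket{\psi_t}\big\|_2 ,
\end{equation}
where $P_S=\sum_{x\in S}\ket{x}\!\bra{x}$.

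Next we average over uniformly random $S$ with $|S|=r$. Each $x$ lies in $S$ with probability $r/d$, so the squared query weight satisfies
\begin{equation}
\mathbb{E}_S\|P_S\ket{\psi_t}\|_2^2\le r/d .
\end{equation}
Applying Cauchy--Schwarz, first over $t$ and then over $S$, gives
\begin{equation}
\mathbb{E}_S\|\ket{\psi_T^S}-\ket{\psi_T}\|_2 \le \frac{\pi\lambda_{\max}}{2\alpha}\,T\sqrt{r/d}.
\end{equation}

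Now we connect this distance to success probability. With oracle $O_0$ the output $\hat z$ has some fixed distribution $q$. For a random $S$ we have
\begin{equation}
\mathbb{E}_S\, q(S)=\sum_z q(z)\Pr[z\in S]=r/d .
\end{equation}
The same holds for $\mathbb{E}_S\,q(S)$ restricted to any event, so it is at most $1/2$ since $r\le d/2$. Under $O_f$ the algorithm succeeds with probability at least $1-\eta$. The gap between the two output distributions is bounded by the trace distance of the final pure states, which is at most $\|\ket{\psi_T^S}-\ket{\psi_T}\|_2$. Hence
\begin{equation}
1-\eta-\frac{1}{2}\le \mathbb{E}_S\|\ket{\psi_T^S}-\ket{\psi_T}\|_2 .
\end{equation}
Combining this with the averaged bound yields
\begin{equation}
T\ge \frac{2}{\pi}\frac{\alpha}{\lambda_{\max}}\sqrt{\frac{d}{r}}\Big(\frac12-\eta\Big),
\end{equation}
which is the claim.

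The main obstacle is the bookkeeping in the averaging step. Controlled access and inverse queries have to be handled uniformly, which works because the deviation only involves $P_S$ on the control-$1$ branch and $|e^{-if}-1|=|e^{if}-1|$. The random-subset averaging of the query weight must also be done carefully, both in applying Cauchy--Schwarz and in correctly linking the $\ell_2$ state distance to the success-probability gap.
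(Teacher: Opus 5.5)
Your proposal is correct and follows essentially the same route as the paper. Both arguments run a BBBV hybrid comparison against the identity oracle, average the null-run query weight over a uniformly random $S$ via Jensen/Cauchy--Schwarz, and bound the success-probability gap $1-\eta-r/d\ge 1/2-\eta$ by the $\ell_2$ distance of the final states. The only difference is cosmetic: you bound the per-query deviation by $|e^{\pm i f}-1|\le f\le \frac{\pi}{2}\lambda_{\max}/\alpha$, whereas the paper uses $2\sin(\arcsin(\lambda_{\max}/\alpha)/2)\le 2\lambda_{\max}/\alpha$, and this changes only the constant.
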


\begin{proof}
	Write the controlled phase oracle and its inverse as
	\begin{equation}
		Q_f^\pm := \ket{0}\!\bra{0}\otimes I + \ket{1}\!\bra{1} \otimes \left[ I + \sum_{j=1}^r \left( e^{\pm i\arcsin(\lambda_j/\alpha)} -1 \right) \ket{s_j}\!\bra{s_j} \right].
	\end{equation}
	The fixed oracle is
	\begin{equation}
		Q_0 := \ket{0}\!\bra{0}\otimes I + \ket{1}\!\bra{1}\otimes I = I.
	\end{equation}

	For an initial state $\ket{\psi_{\mathrm{init}}}$, consider the two final states after $T$ queries:
	\begin{align}
		\ket{\psi_T^{(f)}} &= U_TQ_f^{y_{T-1}}U_{T-1} Q_f^{y_{T-2}} \cdots Q_f^{y_0}U_0 \ket{\psi_{\mathrm{init}}}, \\
		\ket{\psi_T^{(0)}} &= U_TQ_0U_{T-1} Q_0 \cdots Q_0U_0 \ket{\psi_{\mathrm{init}}},
	\end{align}
	where $y_t\in\{+,-\}$, and each $U_t$ is an arbitrary unitary independent of the oracle.
	Let $\Pr_f(\hat{z}\in S)$ and $\Pr_0(\hat{z}\in S)$ be the probabilities of outputting an element of $S$ after measuring the corresponding final state.
	For the POVM effect $E_S$ associated with this event,
	\begin{equation}
		\left| \Tr\left[ E_S \left( \ket{\psi_T^{(f)}}\!\bra{\psi_T^{(f)}} - \ket{\psi_T^{(0)}}\!\bra{\psi_T^{(0)}} \right) \right] \right| \le \frac{1}{2} \left\| \ket{\psi_T^{(f)}}\!\bra{\psi_T^{(f)}} - \ket{\psi_T^{(0)}}\!\bra{\psi_T^{(0)}} \right\|_1.
	\end{equation}
	Thus,
	\begin{equation}
		\Pr_f(\hat{z}\in S) - \Pr_0(\hat{z}\in S) \le \frac{1}{2} \left\| \ket{\psi_T^{(f)}}\!\bra{\psi_T^{(f)}} - \ket{\psi_T^{(0)}}\!\bra{\psi_T^{(0)}} \right\|_1.
	\end{equation}
	Using
	\begin{equation}
		\left\| \ket{\psi_T^{(f)}}\!\bra{\psi_T^{(f)}} - \ket{\psi_T^{(0)}}\!\bra{\psi_T^{(0)}} \right\|_1 \le 2 \left\| \ket{\psi_T^{(f)}} - \ket{\psi_T^{(0)}} \right\|_2,
	\end{equation}
	define
	\begin{equation}
		\delta_S(t) := \left\| \ket{\psi_t^{(f)}} - \ket{\psi_t^{(0)}} \right\|_2.
	\end{equation}
	We obtain the necessary condition
	\begin{equation}
		\Pr_f(\hat{z}\in S) - \Pr_0(\hat{z}\in S) \le \delta_S(T).
		\label{equation:necessary condition}
	\end{equation}

	Let
	\begin{equation}
		\Pi_S := \sum_{x\in S}\ket{x}\!\bra{x}.
	\end{equation}
	For every state $\ket{\psi}$,
	\begin{align}
		\left\| \left(Q_f^\pm-Q_0\right)\ket{\psi} \right\|_2 
        &= \left\| \left[ \ket{1}\!\bra{1} \otimes \sum_{j=1}^r \left( e^{\pm i\arcsin(\lambda_j/\alpha)} -1 \right) \ket{s_j}\!\bra{s_j} \right] \ket{\psi} \right\|_2 \\
		&\le \left| e^{\pm i\arcsin(\lambda_\mathrm{max}/\alpha)} -1 \right| \left\| (I\otimes\Pi_S)\ket{\psi} \right\|_2 \\
		&= 2 \left| \sin\left( \frac{\arcsin(\lambda_\mathrm{max}/\alpha)}{2} \right) \right| \left\| (I\otimes\Pi_S)\ket{\psi} \right\|_2 \\
		&\le 2 \frac{\lambda_\mathrm{max}}{\alpha} \left\| (I\otimes\Pi_S)\ket{\psi} \right\|_2.
	\end{align}
	The last inequality follows from
	\begin{equation}
		0 \le \sin\left(\frac{\arcsin y}{2}\right) \le y \qquad (0\le y\le1).
	\end{equation}

	Using the preceding bound,
	\begin{align}
		\delta_S(t+1) &= \left\| U_{t+1} \left( Q_f^{y_t}\ket{\psi_t^{(f)}} - Q_0\ket{\psi_t^{(0)}} \right) \right\|_2 \\
		&= \left\| Q_f^{y_t}\ket{\psi_t^{(f)}} - Q_0\ket{\psi_t^{(0)}} \right\|_2 \\
		&\le \left\| Q_f^{y_t} \left( \ket{\psi_t^{(f)}}-\ket{\psi_t^{(0)}} \right) \right\|_2 + \left\| \left( Q_f^{y_t}-Q_0 \right) \ket{\psi_t^{(0)}} \right\|_2 \\
		&\le \delta_S(t) + 2 \frac{\lambda_\mathrm{max}}{\alpha} \left\| (I\otimes\Pi_S) \ket{\psi_t^{(0)}} \right\|_2.
	\end{align}
	Since $\delta_S(0)=0$, iterating gives
	\begin{align}
		\delta_S(T) &\le 2 \frac{\lambda_\mathrm{max}}{\alpha} \sum_{t=0}^{T-1} \left\| (I\otimes\Pi_S) \ket{\psi_t^{(0)}} \right\|_2 \notag \\
		&= 2 \frac{\lambda_\mathrm{max}}{\alpha} \sum_{t=0}^{T-1} \sqrt{ \sum_{x\in S}p_t(x) },
		\label{equation:hybrid-distance-bound}
	\end{align}
	where $p_t(x)$ is the probability of obtaining $x$ upon measuring the second register of $\ket{\psi_t^{(0)}}$ with the POVM
	\begin{equation}
		\left\{ I\otimes\ket{x}\!\bra{x} \right\}_{x\in[d]}.
	\end{equation}

	Averaging \eqref{equation:necessary condition} over a uniformly random subset $S\subset[d]$ of size $r$, we obtain 
    \begin{equation}
        \mathbb{E}_S[\delta_S(T)] \ge
        \mathbb{E}_S[\Pr_f(\hat{z}\in S)] - \mathbb{E}_S[\Pr_0(\hat{z}\in S)].
    \end{equation}
	Since the fixed-oracle output is independent of $S$,
	\begin{equation}
		\mathbb{E}_S \left[ \Pr_0(\hat{z}\in S) \right] = \frac{r}{d}.
	\end{equation}
    Moreover, the success condition implies $\mathbb{E}_S[\Pr_f(\hat{z}\in S)] \ge 1-\eta$.
    Therefore,
	\begin{equation}
		\mathbb{E}_S[\delta_S(T)] \ge 1-\eta-\frac{r}{d}.
		\label{equation:hybrid-distance-lower-bound}
	\end{equation}
	On the other hand, Jensen's inequality and \eqref{equation:hybrid-distance-bound} imply
	\begin{align}
		\mathbb{E}_S[\delta_S(T)] &\le 2 \frac{\lambda_\mathrm{max}}{\alpha} \sum_{t=0}^{T-1} \mathbb{E}_S \left[ \sqrt{\sum_{x\in S}p_t(x)} \right] \\
		&\le 2 \frac{\lambda_\mathrm{max}}{\alpha} \sum_{t=0}^{T-1} \sqrt{ \mathbb{E}_S \left[ \sum_{x\in S}p_t(x) \right] } \\
		&= 2 \frac{\lambda_\mathrm{max}}{\alpha} \sum_{t=0}^{T-1} \sqrt{ \frac{r}{d} \sum_xp_t(x) } \\
		&= 2 \frac{\lambda_\mathrm{max}}{\alpha} T \sqrt{\frac{r}{d}}.
		\label{equation:hybrid-distance-upper-bound}
	\end{align}
	Combining \eqref{equation:hybrid-distance-lower-bound} and \eqref{equation:hybrid-distance-upper-bound},
	\begin{equation}
		T \ge \frac{\alpha}{ 2\lambda_\mathrm{max} } \sqrt{\frac{d}{r}} \left( 1-\eta-\frac{r}{d} \right).
	\end{equation}
	Since $r\le d/2$, this proves the claim.
\end{proof}

\begin{Lem}
	\label{lem:block-encoding}
	Let $a\in\mathbb{Z}^+$, let $S=\{s_1<\cdots<s_r\}\subset[d]$. Suppose that $\alpha\ge\lambda_\mathrm{max}$.
	Define $U_f$ by the circuit in Fig.~\ref{fig:circuit1}.
	Then $U_f$ is an $(\alpha,a,0)$-block encoding of
	\begin{equation}
		\rho_S:= \sum_{j=1}^r \lambda_j\ket{s_j}\!\bra{s_j}_X,
	\end{equation}
	namely,
	\begin{equation}
		\left( \bra{0}_A^{\otimes a}\otimes I_X \right) U_f \left( \ket{0}_A^{\otimes a}\otimes I_X \right) = \sum_{j=1}^r \frac{\lambda_j}{\alpha} \ket{s_j}\!\bra{s_j}_X.
	\end{equation}
\end{Lem}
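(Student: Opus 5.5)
The plan is to build $U_f$ from the standard ``linear combination of a phase and its conjugate'' construction, so that the top-left block becomes $\sin f(x)$ on each basis state. The circuit uses one ancilla qubit $A_1$ as the control for $\mathrm{c}\text{-}O_f$ and $\mathrm{c}\text{-}O_f^\dagger$. The remaining $a-1$ ancillas are idle; they are prepared and projected onto $\ket{0}$, so they contribute a factor of $1$ to the block.

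First, apply a known gate to $A_1$ so that it is in a balanced superposition, e.g. $H$. Next, apply $\mathrm{c}\text{-}O_f$ controlled on $\ket{1}_{A_1}$ and $\mathrm{c}\text{-}O_f^\dagger$ controlled on $\ket{0}_{A_1}$; the latter is obtained by conjugating the control with $X$. Together these implement the selector
\begin{equation}
\ket{0}\!\bra{0}\otimes O_f^\dagger+\ket{1}\!\bra{1}\otimes O_f .
\end{equation}
Finally, apply a phase gate $S^\dagger$ (or a suitable $\mathrm{diag}(1,-i)$-type gate) followed by $H$ on $A_1$.

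Because every operator involved is diagonal in the computational basis of $X$, I will compute the $\bra{0}\cdot\ket{0}$ matrix element of $A_1$ separately on each $\ket{x}$. The outcome is
\begin{equation}
\tfrac{1}{2}\left(e^{-if(x)}\cdot c_0+e^{if(x)}\cdot c_1\right),
\end{equation}
where $c_0,c_1$ are the fixed phases coming from the single-qubit gates. I will choose these gates so that $c_0=i$ and $c_1=-i$, which gives $\tfrac{1}{2}\cdot(-i)\left(e^{if(x)}-e^{-if(x)}\right)=\sin f(x)$. For $x=s_j$ this equals $\sin\arcsin(\lambda_j/\alpha)=\lambda_j/\alpha$. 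For $x\notin S$ it equals $\sin 0=0$. The map $f$ is well defined because $\alpha\ge\lambda_{\max}$ ensures every argument of $\arcsin$ lies in $[0,1]$.

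Summing over $x$ yields exactly $(\bra{0}^{\otimes a}\otimes I)U_f(\ket{0}^{\otimes a}\otimes I)=\sum_j(\lambda_j/\alpha)\ket{s_j}\!\bra{s_j}$, and the error is zero. Unitarity of $U_f$ is automatic, since it is a product of unitaries. I will also note that only one query to each of $\mathrm{c}\text{-}O_f$ and $\mathrm{c}\text{-}O_f^\dagger$ is used. For $U_f^\dagger$, the circuit is reversed, which swaps the roles of $\mathrm{c}\text{-}O_f$ and $\mathrm{c}\text{-}O_f^\dagger$ and again costs $O(1)$ queries.

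The main obstacle is matching the phase conventions of the specific circuit in Fig.~\ref{fig:circuit1} so that the result is $\sin f$ with the correct sign, rather than $\cos f$ or $i\sin f$. If the figure's gates differ from my choice, I will recompute the two branch coefficients $c_0,c_1$ directly and adjust by a known single-qubit phase on $A_1$. The remainder is a routine diagonal computation.
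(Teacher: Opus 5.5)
Your proposal is correct and is essentially the paper's proof. The paper likewise traces $\ket{0}^{\otimes a}\ket{x}$ through the circuit, uses diagonality in $x$, and reads off the amplitude $\sin f(x)$ on $\ket{0}^{\otimes a}$. In fact the figure's circuit ($H$, $\mathrm{c}\text{-}O_f$, $X$, $\mathrm{c}\text{-}O_f^\dagger$, $H$, $X$, global phase $e^{-i\pi/2}$) equals yours (with phase gate $iZ$) followed by a $Z$ on $A_1$. It therefore has exactly your branch coefficients $c_0=i$, $c_1=-i$, and no adjustment is needed.

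The one slip is the gate you name. Getting $c_0=i$, $c_1=-i$ requires $\mathrm{diag}(i,-i)=iZ$. The gate $S^\dagger=\mathrm{diag}(1,-i)$ would instead give $\tfrac12\bigl(e^{-if(x)}-ie^{if(x)}\bigr)$, which is nonzero even when $f(x)=0$.
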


\begin{proof}
	Starting from $\ket{0}_A^{\otimes a}\ket{x}_X$, we trace the action of the circuit in Fig.~\ref{fig:circuit1}:
	\begin{align}
		\ket{0}_A^{\otimes a-1}\ket{0}_A\ket{x}_X &\xrightarrow{H} \ket{0}_A^{\otimes a-1} \frac{ \ket{0}_A+\ket{1}_A }{\sqrt{2}} \ket{x}_X \\
		&\xrightarrow{\mathrm{c}\text{-}O_f} \ket{0}_A^{\otimes a-1} \frac{ \ket{0}_A+e^{if(x)}\ket{1}_A }{\sqrt{2}} \ket{x}_X \\
		&\xrightarrow{X} \ket{0}_A^{\otimes a-1} \frac{ \ket{1}_A+e^{if(x)}\ket{0}_A }{\sqrt{2}} \ket{x}_X \\
		&\xrightarrow{\mathrm{c}\text{-}O_f^\dagger} \ket{0}_A^{\otimes a-1} \frac{ e^{-if(x)}\ket{1}_A + e^{if(x)}\ket{0}_A }{\sqrt{2}} \ket{x}_X \\
		&\xrightarrow{H} \ket{0}_A^{\otimes a-1} \frac{ \left(e^{if(x)}+e^{-if(x)}\right)\ket{0}_A + \left(e^{if(x)}-e^{-if(x)}\right)\ket{1}_A }{2} \ket{x}_X \\
		&\xrightarrow{X} \ket{0}_A^{\otimes a-1} \frac{ \left(e^{if(x)}-e^{-if(x)}\right)\ket{0}_A + \left(e^{if(x)}+e^{-if(x)}\right)\ket{1}_A }{2} \ket{x}_X \\
		&\xrightarrow{e^{-i\pi/2}} \ket{0}_A^{\otimes a-1} \left( \sin f(x)\ket{0}_A\ket{x}_X - i\cos f(x)\ket{1}_A\ket{x}_X \right).
	\end{align}
	Therefore,
	\begin{equation}
		U_f \left( \ket{0}_A^{\otimes a}\ket{x}_X \right) = \ket{0}_A^{\otimes a-1} \left( \sin f(x)\ket{0}_A\ket{x}_X - i\cos f(x)\ket{1}_A\ket{x}_X \right).
	\end{equation}
	The amplitude on $\ket{0}_A^{\otimes a}$ is $\sin f(x)$, which equals $\lambda_j/\alpha$ when $x=s_j$ and vanishes when $x\notin S$.
	Hence,
	\begin{equation}
		\left( \bra{0}_A^{\otimes a}\otimes I_X \right) U_f \left( \ket{0}_A^{\otimes a}\otimes I_X \right) = \sum_{j=1}^r \frac{\lambda_j}{\alpha} \ket{s_j}\!\bra{s_j}.
	\end{equation}
	Thus, $U_f$ is an $(\alpha,a,0)$-block encoding of $\rho_S$.
\end{proof}

\begin{figure*}[t]
	\centering
	\begin{quantikz}[background color=black!5!white]
		\lstick{$\ket{0}_A^{\otimes a-1}$} & \qwbundle{a-1} & \qw & \qw & \qw & \qw & \qw & \qw & \qw &  \\
		\lstick{$\ket{0}_A$} & \qw & \gate{H} & \gate[2]{\mathrm{c}\text{-}O_f} & \gate{X} & \gate[2]{\mathrm{c}\text{-}O_f^\dagger} & \gate{H} & \gate{X} & \gate{e^{-i\pi/2}} &  \\
		\lstick{$\ket{x}_X$} & \qw & \qw & \qw & \qw & \qw & \qw & \qw & \qw & \qw
	\end{quantikz}
	\caption{The quantum circuit implementing the block-encoding unitary channel $\mathcal{U}_f$ induced by $U_f$.}
	\label{fig:circuit1}
\end{figure*}

\begin{proof}[Proof of Theorem~\ref{theorem:lower-bound-of-queries}]
	For every subset $S=\{s_1<\cdots<s_r\}\subset[d]$, define
	\begin{equation}
		\rho_S:=\sum_{j=1}^r\lambda_j\ket{s_j}\!\bra{s_j}.
	\end{equation}
	Since a state-recovery algorithm must succeed for every unknown
	rank-$r$ state, it must in particular succeed for every state
	$\rho_S$ in this family.
	Assume that there exists a recovery algorithm $\mathcal{A}$ that, for every $S$, uses at most $N$ queries in total to the block-encoding unitary channel $\mathcal{U}_f$ and $\mathcal{U}_f^\dagger$, induced by an $(\alpha,a,0)$-block-encoding unitary $U_f$ of $\rho_S$ and its inverse, outputs a state $\hat{\rho}_S$ satisfying
	\begin{equation}
		\|\hat{\rho}_S-\rho_S\|_1 \le \varepsilon.
	\end{equation}
    Here, $N$ denotes the worst-case number of queries over $S$.
    
	Using this subroutine, we construct a search algorithm with access to $\mathrm{c}\text{-}O_f$ and $\mathrm{c}\text{-}O_f^\dagger$.
	First, we implement $\mathcal{U}_f$ from Lemma~\ref{lem:block-encoding} using a constant number of queries to these controlled phase oracles.
	Since $\sum_{j=1}^r\lambda_j=1$, the unitary $U_f$ is an $(\alpha,a,0)$-block encoding of the quantum state
	\begin{equation}
		\rho_S = \sum_{j=1}^r \lambda_j\ket{s_j}\!\bra{s_j}.
	\end{equation}
	We then run $\mathcal{A}$ with access to $\mathcal{U}_f$ and $\mathcal{U}_f^\dagger$ to obtain $\hat{\rho}_S$, and measure it in the computational basis.

	Let
	\begin{equation}
		p_S(x) := \bra{x}\rho_S\ket{x}, \qquad q_S(x) := \bra{x}\hat{\rho}_S\ket{x}
	\end{equation}
	be the distributions induced by this measurement.
	By data processing under the computational-basis measurement,
	\begin{equation}
		d_{\mathrm{TV}}(p_S,q_S) = \frac{1}{2} \sum_x|p_S(x)-q_S(x)| \le \frac{1}{2} \|\hat{\rho}_S-\rho_S\|_1 \le \frac{\varepsilon}{2}.
	\end{equation}
	Since $\sum_{z\in S}p_S(z)=1$, the variational characterization of total variation distance gives
	\begin{equation}
		1-\sum_{z\in S}q_S(z) = \sum_{z\in S}p_S(z)-\sum_{z\in S}q_S(z) \le d_{\mathrm{TV}}(p_S,q_S) \le \frac{\varepsilon}{2}.
	\end{equation}
	Therefore,
	\begin{equation}
		\sum_{z\in S}q_S(z) \ge 1-\frac{\varepsilon}{2}.
	\end{equation}

	By Lemma~\ref{lem:block-encoding}, each query to $\mathcal{U}_f$ or $\mathcal{U}_f^\dagger$ uses only a constant number of queries to $\mathrm{c}\text{-}O_f$ and $\mathrm{c}\text{-}O_f^\dagger$.
	Thus, the constructed search algorithm uses $O(N)$ phase-oracle queries in the worst case over $S$ and succeeds with probability at least $1-\varepsilon/2$ for every $S$.
	Since every state in the family satisfies
	\begin{equation}
		\lambda_{\max}(\rho_S)=\lambda_{\mathrm{max}},
	\end{equation}
	the same bound can be written in terms of the largest eigenvalue of the state.
	Denoting a worst-case member of this family by $\rho$, we obtain
	\begin{equation}
		N=\bigOmega{\frac{\alpha}{\lambda_{\max}(\rho)}\sqrt{\frac{d}{r}}\left(1-\frac{\varepsilon}{2}-\frac{r}{d}\right)}.
	\end{equation}
	Since $r/d\le1/2$,
	\begin{equation}
		N = \bigOmega{ \frac{\alpha}{\lambda_{\max}(\rho)} \sqrt{\frac{d}{r}} (1-\varepsilon) }.
	\end{equation}
\end{proof}
\end{document}